\newtheorem{theorem}{Theorem}[section]
\newtheorem{lemma}[theorem]{Lemma}
\newtheorem*{lemma-non}{Lemma}
\newcommand{\figref}[1]{Fig.~\ref{#1}}  
\newcommand{\tabref}[1]{Table~\ref{#1}}
\newcommand{\secref}[1]{Sec.~\ref{#1}}
\newcommand{\appref}[1]{Appendix~\ref{#1}}
\newcommand{\algoref}[1]{Alg.~\ref{#1}}
\newcommand{\lmaref}[1]{Lemma~\ref{#1}}
\newcommand{\Eqref}[1]{(\ref{#1})}
\algrenewcommand\alglinenumber[1]{\tiny #1:}
\title{Task-aware Distributed Source Coding \\ under Dynamic Bandwidth}
\author{
    Po-han Li$^{1}$\thanks{Equal contribution; order decided randomly. Correspondence to \{pohanli, sravan.ankireddy\}@utexas.edu.}
    \quad
    Sravan Kumar Ankireddy$^{1*}$ 
    \quad
    Ruihan Zhao$^{1}$
    \quad
    Hossein Nourkhiz Mahjoub$^{2}$
    \AND
    Ehsan Moradi-Pari$^{2}$
    \quad
    Ufuk Topcu$^{1}$
    \quad
    Sandeep Chinchali$^{1}$
    \quad
    Hyeji Kim$^{1}$
    \AND \vspace{-2em}
    \\
    $^{1}$The University of Texas at Austin 
    $^{2}$Honda Research Institute USA 
}
\begin{document}

\maketitle
\newcommand{\philip}[1] {{\color{green} \textbf{[philip]: #1}}}
\newcommand{\pohan}[1] {{\color{blue} \textbf{[Po-han]: #1}}}
\newcommand{\sravan}[1]{{\color{red} \textbf{[Sravan]: #1}}}

\newcommand{\N}{N}
\newcommand{\X}[1]{X_{#1}}
\newcommand{\Xcom}{\tilde{X}_2}
\newcommand{\Y}{Y}
\newcommand{\Z}[1]{Z_{#1}}
\newcommand{\Zhat}[1]{\hat{Z}_{#1}}
\newcommand{\Xhat}[1]{\hat{X}_{#1}}
\newcommand{\Yhat}{\hat{Y}}
\newcommand{\xdim}[1]{n_{#1}}
\newcommand{\ydim}{p}
\newcommand{\zdim}[1]{m_{#1}}
\newcommand{\task}{\Phi}
\newcommand{\enc}[1]{E_{#1}}
\newcommand{\dec}{D}
\newcommand{\taskdec}{M}
\newcommand{\nenc}[1]{\mathrm{Enc}_{#1}}
\newcommand{\ndec}{\mathrm{Dec}}

\newcommand\normx[1]{\Vert#1\Vert}

\begin{abstract}
Efficient compression of correlated data is essential to minimize communication overload in multi-sensor networks. 
In such networks, each sensor independently compresses the data and transmits them to a central node.
A decoder at the central node decompresses and passes the data to a pre-trained machine learning-based task model to generate the final output. 
Due to limited communication bandwidth, it is important for the compressor to learn only the features that are relevant to the task.
Additionally, the final performance depends heavily on the total available bandwidth. In practice, it is common to encounter varying availability in bandwidth. Since higher bandwidth results in better performance, it is essential for the compressor to dynamically take advantage of the maximum available bandwidth at any instant.
In this work, we propose a novel distributed compression framework composed of independent encoders and a joint decoder, which we call neural distributed principal component analysis (NDPCA). 
NDPCA flexibly compresses data from multiple sources to any available bandwidth with a single model, reducing compute and storage overhead. 
NDPCA achieves this by learning low-rank task representations and efficiently distributing bandwidth among sensors, thus providing a graceful trade-off between performance and bandwidth. 
Experiments show that NDPCA improves the success rate of multi-view robotic arm manipulation by $9\%$ and the accuracy of object detection tasks on satellite imagery by $14\%$ compared to an autoencoder with uniform bandwidth allocation.
\footnote{\href{https://github.com/UTAustin-SwarmLab/Task-aware-Distributed-Source-Coding}{ https://github.com/UTAustin-SwarmLab/Task-aware-Distributed-Source-Coding}.}

\end{abstract}

\section{Introduction}



Efficient data compression plays a pivotal role in multi-sensor networks to minimize communication overload. 
Due to the limited communication bandwidth of such networks, it is often impractical to transmit all sensor data to a central server, and compression of the data is necessary. It is important for the sensors to compress the respective data independently, to avoid communication overload in the network. Information theory literature refers to this setting as distributed source coding~\cite{networkinfotheory}, where the goal is to recover the original data with minimal distortion. In many cases, the data collected by the sensors is only processed by a downstream task model, \textit{e.g.}, an object detection model, but not by humans, and hence the original distributed source coding goal of minimizing reconstruction error is no longer applicable. Instead, the goal should be to maximize the performance of the downstream task model. Additionally, in practice, data collected by multi-sensor networks is often correlated \textit{e.g.} stereo cameras with overlapping fields of view. To improve communication efficiency, it is important for the compression framework to take advantage of the correlation and avoid transmission of redundant data. Combining both objectives, the final goal of the distributed compression framework is to learn relevant features that maximize the task performance, while avoiding the transmission of redundant features by exploiting the correlation between sources. Together, we refer to the distributed compression of task-relevant features as \textit{task-aware distributed source coding}.

However, existing compression methods fail to combine the following three aspects: 
1. Existing distributed compression methods perform poorly in the presence of a task model. 
Although neural networks have been shown to be capable of compressing stereo images \citep{balle2016end,balle2018variational} and correlated images \citep{zhang2023ldmic}, existing methods focus on reconstructing image data, but not for downstream tasks.
2. Existing task-aware compression methods cannot take advantage of the correlation of sources. Previous works only consider compressing task-relevant features of single source \cite{cheng2022taskaware,TaskDependentCodebook,TaskAwareJPEGImage,nakanoya2023co,cheng2021data}, but not multiple correlated sources. 
3. All existing methods for 1 \& 2, especially those based on neural networks, only compress data to a fixed level of compression but not to multiple levels.
Thus, they cannot operate in environments with different demands of compression levels and require a separate model trained for each compression level.
Here, we note that we use the term bandwidth to indicate the information bottleneck in the dimension of transmitted data. Based on the choice of quantization, it is straightforward to convert the latent dimension to other popular metrics such as bits per pixel (bpp) in the case of image sources. Additionally, we consider the scenario of total bandwidth constraint for the uplink, which is typical for wireless  networks~\cite{liu2022training}.

We design neural distributed principal component analysis (NDPCA)–a distributed compression framework that can transmit task-relevant features at multiple compression levels.
We consider the case where a task model at the central node requires data from all sources and the bandwidth in the network is not consistent over time, as shown in Fig. \ref{fig:system_graph}. 
In NDPCA, neural encoders $\enc{1}, \enc{2}, \dots, \enc{K}$ first independently compress correlated data $\X{1}, \X{2}, \dots, \X{K}$ to latent representations $\Z{1}, \Z{2}, \dots, \Z{K}$. A proposed module called distributed principal component analysis (DPCA) further compresses these representations to any lower dimension according to the current bandwidth and decompresses the data at the central node. 
Finally, a neural decoder at the central node decodes the representations $\Zhat{1}, \Zhat{2}, \dots, \Zhat{k}$ to $\Xhat{1}, \Xhat{2}, \dots,\Xhat{K}$ and feeds them into a task. 
Task-aware compression aims to minimize task loss, defined as the difference in task outputs with and without compression, such as the difference in object detection results. 
Due to the significant training cost involved, we avoid training the task model, which is usually a large pre-trained neural network.

To highlight our proposed method, NDPCA learns task-relevant representations with a single model at multiple compression levels.
It includes a neural autoencoder to generate uncorrelated task-relevant representations in a fixed dimension. It is desirable to learn uncorrelated representations to prevent the transmission of redundant information. It also includes a module for linear projection, DPCA, to allocate the available bandwidth among sources based on the importance of the task, by observing the respective principal components, and then further compressing the representations to any desired dimension. 
By harmoniously combining the neural autoencoder and the linear DPCA module, NDPCA generates representations that are more compressible in limited bandwidths, providing a graceful trade-off between performance and bandwidth. 


\begin{figure}[t!]
\centering
\includegraphics[width=0.85\textwidth]{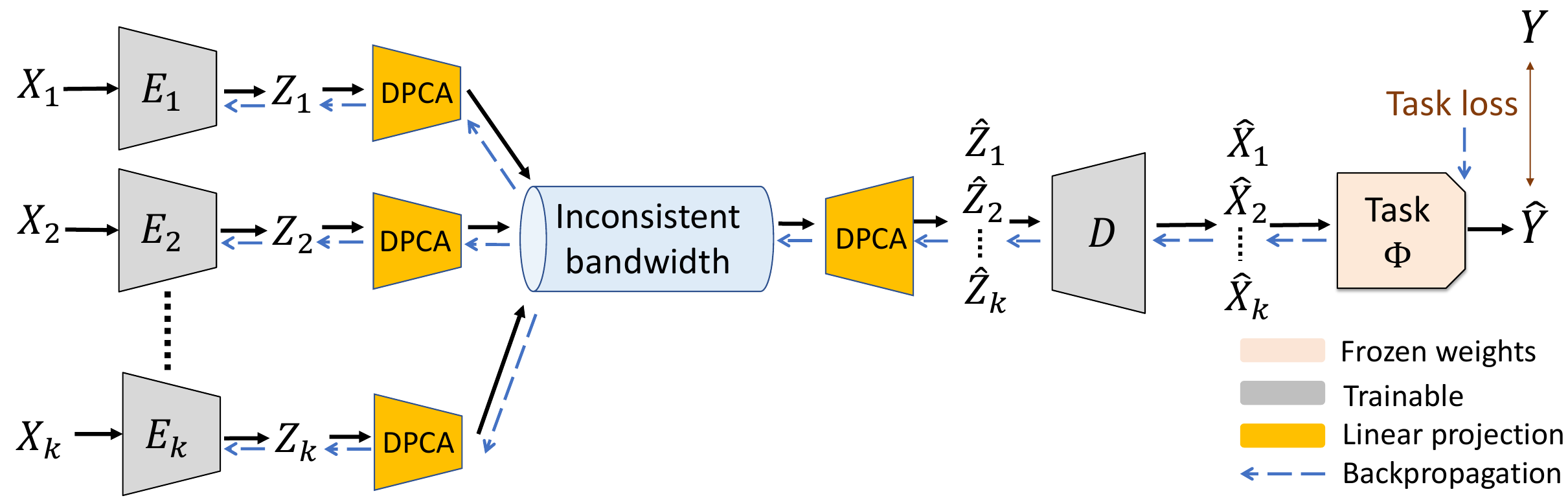}
\caption{\small
    \textbf{Task-aware distributed source coding with NDPCA.}
    $\X{1}, \dots, \X{k}$ are correlated data sources. Encoders $\enc{1}, \dots, \enc{k}$ independently compress data to latent representations $\Z{1}, \dots, \Z{k}$.
    Using linear matrices, the DPCA module projects the representations to any lower dimension at the encoder and projects them back to the original data space at the decoder, which allocates the bandwidth of sources based on the importance of the task $\task$. The goal is to find the optimal encoders and decoder that minimize the final task loss.}
\label{fig:system_graph}
\end{figure}


\textbf{Contributions:} Our contributions are three-fold: 
First, we formulate a task-aware distributed source coding problem that optimizes the compression for a given task instead of reconstructing the sources (\secref{sec:formulation}). 
Second, we provide a theoretical justification for the framework by analyzing the case of a linear compressor and a linear task (\secref{sec:DPCA}).
Finally, we propose a task-aware distributed source coding framework, NDPCA, that learns a single model for different levels of compression to handle any type of source and task(\secref{sec:NDPCA}). 
We validate NDPCA with tasks of CIFAR-10 image denoising, multi-view robotic arm manipulation, and object detection of satellite imagery (\secref{sec:exps}). NDPCA results in a $1.2$dB increase in PSNR, a $9\%$ increase in success rate, and a $14\%$ increase in accuracy compared to an autoencoder with uniform bandwidth allocation, for the respective experiments mentioned above.

\section{Problem Formulation}
\label{sec:formulation}
We now define the problem statement more formally. Consider a set of $K$ correlated sources. Let $x_i \in \mathbb{R}^{\xdim{i}}$ denote the sample from source $i$ where $i \in \{ 1,2, \dots, K \}$. Samples from each source $i$ are compressed independently by encoder $E_i$ to a latent representation $z_i \in \mathbb{R}^{\zdim{i}}$ such that $\sum_{i=1}^{K}m_i = m$, where $m$ is the total bandwidth available. A joint decoder $D$ receives the representations  $\{z_1, z_2, \dots, z_k \}$ and reconstructs the sources $\{ \hat{x}_1, \hat{x}_2, \dots, \hat{x}_k \}$. 
In the setting without a task, the goal is to find a set of encoders and a decoder to recover the inputs $\{x_1, x_2, \dots, x_k\}$ with minimal loss:
\begin{equation}
\begin{aligned}
   \operatorname*{argmin}_{ E_1, E_2, \dots, E_k, D} \sum_{i=1}^{K} \mathcal{L}_{\text{rec}}(x_i, \hat{x}_i) \quad \text{(\textit{Task-agnostic} distributed source coding)},
\label{eq:rec_loss}
\end{aligned}
\end{equation}
where $\mathcal{L}_{\text{rec}}$ is the reconstruction loss, \textit{e.g.}, the mean-squared error loss.

In the presence of a task $\Phi$, it takes the reconstructed inputs to compute the final output $\Phi(\hat{x}_1, \hat{x}_2, \dots, \hat{x}_k)$. The goal is to find a set of encoders and a decoder such that the task loss $\mathcal{L}_{\text{task}}$ is minimized, where $\Phi(x_1, x_2, \dots, x_k)$ is the task output computed without compression. We refer to this problem as \textit{task-aware} distributed source coding, which is the main focus of this paper: 
\begin{equation}
\begin{aligned}
   \operatorname*{argmin}_{ E_1, E_2, \dots, E_k, D}\mathcal{L}_{\text{task}}( \Phi(x_1, x_2, \dots, x_k) , \Phi(D(E_1(x_1),E_2(x_2), \dots, E_k(x_k) ))) \\
   \quad \text{(\textit{Task-aware} distributed source coding)},
\label{eq:task_loss}
\end{aligned}
\end{equation}
where $\mathcal{L}_{\text{task}}$ is the task loss, \textit{e.g.}, the difference of bounding boxes when the task is object detection.

\textbf{Bandwidth allocation:}
In the previous formulations, we assume that the output dimensions of encoders are known a priori. 
However, the dimensions are related to the compression quality of each encoder, which is also a design factor. 
That is, given the total available bandwidth $\zdim{}$, we first need to obtain the optimal $\zdim{i}$ for each source $i$, then, we can design the optimal encoders and decoder accordingly. 
Finding the optimal set of bandwidths for a given task is a long-standing open problem in information theory \cite{el2011network}, even for the simple task of a modulo-two sum of two binary sources \cite{modulotwo}. Also, existing works  \cite{zhang2023ldmic,whang2021neuralDSC,NDIC} largely assume a fixed latent dimension for sources and train different models for different total available bandwidth $\zdim{}$, which is, of course, suboptimal.
In this paper, our framework provides heuristics to the underlying key challenge of optimally allocating available bandwidth, \textit{i.e.}, deciding $\zdim{i}$, while adapting to different total bandwidths $\zdim{}$ with a single model.

\section{Theoretical Analysis}
\label{sec:DPCA}



We start with a motivating example of task-aware distributed source coding under the constraint of linear encoders, a decoder, and a linear task. 
We first solve the linear setting using our proposed method, distributed principal component analysis (DPCA). 
We then describe how DPCA compresses data to different bandwidths and analyze the performance of DPCA. In this way, we gain insights into combining DPCA with neural autoencoders in later \secref{sec:NDPCA}. 



\textbf{DPCA Formulation:} 
We consider a linear task for two sources, defined by the task matrix $\Phi\in \mathbb{R}^{\ydim\times(\xdim{1}+\xdim{2})}$, where the sources $x_1 \in \mathbb{R}^{n_1}$ and $x_2 \in \mathbb{R}^{n_2}$ are of dimensions $n_1$ and $n_2$, respectively, and the task output is given by $y = \Phi x \in \mathbb{R}^{\ydim}$, where $x = [x_1^\top, x_2^\top]^\top$. Without loss of generality, we assume the sources to be zero-mean. 
Now, we have $N$ observations of 
two sources $\X{1}\in \mathbb{R}^{\xdim{1}\times\N}$ and $\X{2}\in \mathbb{R}^{\xdim{2}\times\N}$ and their corresponding task outputs $\Y = \Phi(\X{}) \in \mathbb{R}^{\ydim\times\N}$, where $X=[X_1^\top X_2^\top]^\top$. 

We aim to design the optimal linear encoding matrices (encoders) $E_1 \in \mathbb{R}^{\zdim{1}\times\xdim{1}}$, $E_2 \in \mathbb{R}^{\zdim{2}\times\xdim{2}}$, and the decoding matrix (decoder) $D \in \mathbb{R}^{(\xdim{1}+\xdim{2})\times(\zdim{1}+\zdim{2})}$ that minimizes the task loss defined as the Frobenius norm of $\Phi(X)-\Phi(\hat{X})$, where $\hat{X}$ is the reconstructed $X$. 
We only consider the non-trivial case where the total bandwidth is less than the task dimension, $m = m_1 + m_2 < \ydim$, i.e., the encoders cannot directly calculate the task output locally and transmit it to the decoder. For now, we assume that $m_1$ and $m_2$ are given, and we discuss the optimal allocation later in this section. 

Letting $\Z{1} = E_1 X_1\in\mathbb{R}^{\zdim{1}\times\N}$ and $\Z{2} = E_2 X_2\in\mathbb{R}^{\zdim{2}\times\N}$ denote the encoded representations and $M = \Phi D$ denote the product of the task and decoder matrices, we solve the optimization problem: 
\begin{subequations}
\begin{align}
\enc{1}^*, \enc{2}^*, \taskdec^* =
\operatorname*{argmin}_{\enc{1}, \enc{2}, \taskdec} & \quad \|Y-\taskdec \Z{}\|_2^2 \label{eq:DPCA_obj} \\ 
\mathrm{s.t. }  
& \quad \Z{} = \begin{bmatrix}\Z{1}\\ \Z{2}\end{bmatrix} = \begin{bmatrix}\enc{1}\X{1} \\ \enc{2}\Xcom \end{bmatrix}, \label{eq:DPCA_enc} \\
& \quad \Z{} \Z{}^\top  = \mathbb{I}_{\zdim{}} \label{eq:DPCA_orth}, \\
& \quad \Yhat = \task \dec \Z{}= {\taskdec} \Z{}, ~~~\Y = \task 
\label{eq:DPCA_dec}
\begin{bmatrix}\X{1} \\ \X{2} \end{bmatrix}.
\end{align}
\label{eq:DPCA}
\end{subequations}
Note that solving $\taskdec$ is identical to solving the decoder $\dec$ since we can always convert $M$ to $\dec$ by the generalized inverse of task $\task$. 
The encoders $E_1$ and $E_2$ project the data to representations $\Z{1}$ and $\Z{2}$ in \Eqref{eq:DPCA_enc}. 
We constrain the representations to be orthonormal vectors in \Eqref{eq:DPCA_orth} as in the normalization in principal component analysis (PCA) for the compression of a single source \citep{PCA}. This constraint lets us decouple the problem into subproblems later in \Eqref{eq:decouple}. 
Finally, in \Eqref{eq:DPCA_dec}, the decoder $\dec$ decodes $\Z{1}$ and $\Z{2}$ to $\Xhat{1}$ and $\Xhat{2}$ and passes the reconstructed data to task $\task$. 


\textbf{Solution:} We now solve the optimization problem in~\eqref{eq:DPCA}. 
\label{sec:sub_DPCA}
For any given $E_1,E_2$ (thus, a given $Z$), we can optimally obtain $\taskdec^*= \Y\Z{}^\top(\Z{}\Z{}^\top)^{-1} = \Y\Z{}^\top$ by linear regression.
Now, we are left to find the optimal encoders $E_1,E_2$.
First, a preprocessing step removes the correlation part of $\X{1}$ from $\X{2}$ by subtracting the least-square estimator $\hat{\X{2}}(\X{1})$: 
\begin{equation}
\Xcom = \X{2} - \hat{\X{2}}(\X{1}) = \X{2} - \X{2}\X{1}^\top(\X{1}\X{1}^\top)^{-1}\X{1}.
\label{eq:preprocess}
\end{equation}
The orthogonality principle of least-square estimators \citep[p.386]{FundamentalsSignalProcessing} ensures that $\X{1}{\Xcom}^\top=\mathbf{0}_{\xdim{1}\times \xdim{2}}$.
We decouple the objective in \Eqref{eq:DPCA_obj} with respect to $\enc{1},\enc{2}$ by the orthogonality principle and \Eqref{eq:DPCA_orth}:
\begin{equation}
\begin{aligned}
\operatorname*{min}_{E1, E2} \|\Y-\taskdec^* \Z{}\|_2^2 
= \|\Y\|_2^2 - \max_{\enc{1}, \enc{2}}\|\taskdec^*\|_2^2 
= \|\Y\|_2^2 - \operatorname*{max}_{\enc{1}} \|\Y_1\X{1}^\top \enc{1}^\top\|_2^2 - \operatorname*{max}_{\enc{2}} \|\Y_2 \Xcom^\top \enc{2}^\top\|_2^2,
\end{aligned}
\label{eq:decouple}
\end{equation}
where $\Y = \task \X{} = \begin{bmatrix} \task_{1} \task_{2}\end{bmatrix} \begin{bmatrix}\X{1}^\top \X{2}^\top\end{bmatrix}^\top= \Y_{1} + \Y_{2}$.
We then have two subproblems from \Eqref{eq:DPCA}: \\
\begin{minipage}[t]{0.5\textwidth}
\begin{equation}
\begin{aligned}
\enc{1}^*=\operatorname*{argmax}_{\enc{1}} & \quad \|\task_1\X{1} \X{1}^\top\enc{1}^\top\|_2^2 \\
\mathrm{s.t. }  
& \quad \enc{1}\X{1}\X{1}^\top\enc{1}^\top = \mathbb{I}_{\zdim{1}},
\label{eq:DPCA_sub1}
\end{aligned}
\end{equation}
\end{minipage}\begin{minipage}[t]{0.5\textwidth}
\begin{equation}
\begin{aligned}
\enc{2}^*=\operatorname*{argmax}_{\enc{2}} & \quad \|\task_2\Xcom\Xcom^\top\enc{2}^\top\|_2^2 \\
\mathrm{s.t. }  
& \quad \enc{2}\Xcom{\Xcom}^\top\enc{2}^\top = \mathbb{I}_{\zdim{2}}. 
\label{eq:DPCA_sub2}
\end{aligned}
\end{equation}
\end{minipage}

The two subproblems are the canonical correlation analysis \citep{CCA}, which can be solved by whitening $\enc{1}\X{1},\enc{2}\Xcom$ and singular value decomposition (see \cite{CCA} for details). 

\textbf{Dynamic bandwidth:} So far, we solved the case for fixed bandwidths $\zdim{1}$ and $\zdim{2}$. 
We now describe ways to determine the optimal bandwidth allocation given a current total bandwidth $\zdim{}$. 
To do so, DPCA solves \Eqref{eq:DPCA_sub1} and \Eqref{eq:DPCA_sub2} with $\zdim{1}=\xdim{1}$ and $\zdim{2}=\xdim{2}$ and obtains $\enc{1}^*$, $\enc{2}^*$ and all pairs of canonical directions and correlations.
Canonical directions and correlations can be analogized to a more general case of singular vectors and values.
Similar to PCA, the sums of squares of canonical correlations are the optimal values of \Eqref{eq:DPCA_sub1} and \Eqref{eq:DPCA_sub2}, so DPCA sorts all the canonical correlations in descending order and chooses the first $\zdim{}$ pairs of canonical correlations and directions. 
These canonical correlations determine the optimal encoders $\enc{1}^*,\enc{2}^*$ and decoder $\dec^*$, which indirectly solves $\zdim{1}$ and $\zdim{2}$. Intuitively, the canonical correlations indicate the importance of a direction to the task, and we prioritize the transmission of directions by importance. For simplicity, we only consider the case of $2$ sources. DPCA can easily compress more sources simply by constraining all $\Z{}$s to be independent and thus decoupling the original problem \Eqref{eq:DPCA} to more subproblems.

\textbf{Performance analysis of DPCA:}
When DPCA compresses new data matrices with encoder $\enc{1}^*$ and $\enc{2}^*$, the preprocessing step is invalid as the encoders cannot communicate with each other. 
So for DPCA to perform optimally while skipping the step, the two data matrices need to be uncorrelated, namely, $\hat{\X{2}}(\X{1})=0$, because in such case the preprocessing step removes nothing from the data sources. 
Given that correlated sources lead to suboptimality of DPCA, we characterize the performance between the joint compression, PCA, and the distributed compression, DPCA, under the same bandwidth in \lmaref{lemma:boundDPCA} with the simplest case of reconstruction, namely, $\task=\mathbb{I}_{\ydim}$. 
In this setting, the canonical correlation analysis is relaxed to the singular value decomposition, which is later used for NDPCA in \secref{sec:NDPCA}. 
\begin{lemma}[Bounds of DPCA Reconstruction]
\label{lemma:boundDPCA}
Given a zero-mean data matrix and its covariance, 
$$X = \begin{bmatrix}
X_1 \\
X_2
\end{bmatrix} \in \mathbb{R}^{(\xdim{1}+\xdim{2})\times N}, 
XX^\top =  \underbrace{\begin{bmatrix}
\mathrm{Cov}_{11} & \mathbf{0} \\
\mathbf{0} & \mathrm{Cov}_{22}
\end{bmatrix}}_{X_{\mathrm{diag}}}
+ \underbrace{\begin{bmatrix}
\mathbf{0} & \mathrm{Cov}_{12} \\
\mathrm{Cov}_{21} & \mathbf{0} 
\end{bmatrix}}_{\Delta X}, $$
assume that $\Delta X$ is relatively smaller than $XX^\top$, and $XX^\top$ is positive definite with distinct eigenvalues.
For PCA's encoding and decoding matrices $E_{\mathrm{PCA}}, D_{\mathrm{PCA}}$ and DPCA's encoding and decoding matrices $E_{\mathrm{DPCA}}, D_{\mathrm{DPCA}}$, the difference of the reconstruction losses is bounded by 
\begin{align*}
0 \leq 
\| X - D_{\mathrm{DPCA}}\ E_{\mathrm{DPCA}}(X)\|_2^2 - \| X - D_{\mathrm{PCA}} E_{\mathrm{PCA}}(X)\|_2^2
= - \sum_{i=\zdim{}+1}^{\xdim{1}+\xdim{2}} \lambda_ie_i^\top \Delta X e_i.
\end{align*}
where $\lambda_i$ and $e_i$ are the $i$-th largest eigenvalue and eigenvector of $XX^\top$, $\mathrm{Tr}$ is the trace function, and $\zdim{}$ is the dimension of the compression bottleneck. 
\end{lemma}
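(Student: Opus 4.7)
My plan is to rewrite both reconstruction losses as sums of truncated eigenvalues, then relate the eigenvalues of $XX^\top$ to those of $X_{\mathrm{diag}}$ via first-order perturbation theory, treating $\Delta X$ as the perturbation. The non-negativity of the gap will follow from the Eckart--Young optimality of PCA, and the closed-form expression on the right will emerge from computing the discarded spectral mass under each scheme.

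First I would handle the lower bound $0\le \|X-D_{\mathrm{DPCA}}E_{\mathrm{DPCA}}(X)\|_2^2-\|X-D_{\mathrm{PCA}}E_{\mathrm{PCA}}(X)\|_2^2$. Since DPCA is constrained to a block-diagonal encoder $E_{\mathrm{DPCA}}=\mathrm{diag}(E_1,E_2)$ while PCA is free to use any rank-$m$ projection, the Eckart--Young theorem implies PCA attains the minimum Frobenius reconstruction error among all rank-$m$ reconstructions, so the DPCA loss cannot be smaller.

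Next I would compute both losses in closed form. For PCA, using $D_{\mathrm{PCA}}=E_{\mathrm{PCA}}^\top$ and orthonormality of the rows of $E_{\mathrm{PCA}}$, the reconstruction error equals $\mathrm{Tr}(XX^\top)-\sum_{i=1}^{m}\lambda_i=\sum_{i=m+1}^{n_1+n_2}\lambda_i$, where $\lambda_i$ are the sorted eigenvalues of $XX^\top$. For DPCA, the crucial observation is that the block-diagonal encoder $E_{\mathrm{DPCA}}$ annihilates the off-diagonal part, since $\mathrm{Tr}(E_{\mathrm{DPCA}}\,\Delta X\,E_{\mathrm{DPCA}}^\top)=0$ (the matrix $E_{\mathrm{DPCA}}\,\Delta X\,E_{\mathrm{DPCA}}^\top$ has zero diagonal blocks). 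Hence DPCA's retained energy depends only on $X_{\mathrm{diag}}$, and the optimal bandwidth split across sources amounts to selecting the top $m$ eigenvalues of $X_{\mathrm{diag}}$; call these $\mu_1\ge\mu_2\ge\cdots$. Using $\mathrm{Tr}(\Delta X)=0$ (so $\mathrm{Tr}(X_{\mathrm{diag}})=\mathrm{Tr}(XX^\top)$), the DPCA loss evaluates to $\sum_{i=m+1}^{n_1+n_2}\mu_i$. Therefore the gap equals $\sum_{i=m+1}^{n_1+n_2}(\mu_i-\lambda_i)$.

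Finally I would convert $\mu_i-\lambda_i$ into the stated expression via first-order Rayleigh--Schrödinger perturbation theory applied to $XX^\top=X_{\mathrm{diag}}+\Delta X$. The positive-definiteness and distinct-eigenvalue hypotheses ensure the perturbation series is well defined, and the smallness assumption on $\Delta X$ justifies keeping only the leading term $\lambda_i-\mu_i=e_i^\top\Delta X\,e_i+O(\|\Delta X\|^2)$. Summing over the discarded indices gives the stated identity (matching the paper's form up to the weighting convention on $e_i$). The main obstacle I anticipate is making the perturbation step rigorous under the informal phrase ``$\Delta X$ is relatively smaller than $XX^\top$'': I would interpret this as $\|\Delta X\|<\min_{i\ne j}|\lambda_i-\lambda_j|/2$ to guarantee analytic dependence of each simple eigenvalue on the perturbation, which together with Weyl's inequality controls the higher-order remainder and yields the claimed equality to the stated order.
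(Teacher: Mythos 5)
Your proposal follows essentially the same route as the paper's proof: lower bound from the optimality of joint PCA, a trace computation in which the block-diagonal DPCA encoder/decoder annihilates $\Delta X$ so the gap reduces to $\sum_{i=m+1}^{n_1+n_2}\bigl(\lambda_i(X_{\mathrm{diag}})-\lambda_i(XX^\top)\bigr)$, and then first-order matrix perturbation theory on $XX^\top = X_{\mathrm{diag}}+\Delta X$. The only divergence is the one you already flag: your (standard Rayleigh--Schr\"odinger) first-order term is $e_i^\top \Delta X e_i$, whereas the paper writes the $\lambda_i$-weighted quantity $\lambda_i e_i^\top \Delta X e_i$, a discrepancy in the paper's stated form rather than in your argument.
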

The proof of \lmaref{lemma:boundDPCA} is in \appref{sec:app_proof}. Note that $\Delta X$ is the correlation of sources, so as $\|\Delta X\|_F$ gets smaller, the difference of PCA and DPCA is closer to $0$. That is, as the covariance decreases, DPCA performs more closely to PCA, which is the optimal joint compression. 

To summarize, 
uncorrelated data matrices $\X{1}, \dots, \X{K}$ are desired for DPCA. 
If so, DPCA 
optimally decides the bandwidths of all sources based on the canonical correlations, representing their importance for the task. 
One application of DPCA is that encoders can use the remaining unselected canonical directions to improve compression when the available bandwidth is higher later.




\section{Neural Distributed Principal Component Analysis}
\label{sec:NDPCA}

The theoretical analysis in the previous section indicates that DPCA has two drawbacks: it only compresses data optimally if sources are uncorrelated, and it only works for linear tasks. However, DPCA dynamically allocates bandwidth to sources based on their importance. 
On the other hand, neural autoencoders are shown to be powerful tools for compressing data to a fixed dimension but cannot dynamically allocate bandwidth. 
This contrast motivates us to harmoniously combine a neural autoencoder to generate representations and then pass them through DPCA to compress and find the bandwidth allocation. We refer to the combination of a neural autoencoder and DPCA as neural distributed principal component analysis (NDPCA). With a single neural autoencoder and a matrix at each encoder and decoder, NDPCA adapts to any available bandwidth and flexibly allocates bandwidth to sources according to their importance to the task.


\textbf{Outline:} 
NDPCA has two encoding stages, as shown in \figref{fig:system_graph}: First, the neural encoder at each $k$-th source encodes data $\X{k}$ to a {fixed-dimensional} representation $\Z{k}$ for $k \in [K]$. Then the DPCA linear encoder {adapts} the dimension of $\Z{k}$  
via linear projection  according to the available bandwidth and the correlation among the sources as per \Eqref{eq:DPCA_sub1}.  
Similarly, the decoding of NDPCA is also performed in two stages. First, 
the DPCA linear decoder reconstructs the $K$ fixed-dimensional representations $\Zhat{1}, \Zhat{2}, \dots, \Zhat{K}$, based on which the joint neural decoder generates the estimate of data $\Xhat{1}, \Xhat{2}\, \dots, \Xhat{K}$. These estimates are then passed to the neural task model $\task$ to obtain the final task output $\Yhat$. Note that since we have a non-linear task model here, we envision that the neural encoders generate non-linear embedding of the sources, while the DPCA mainly adapts the dimension appropriately as needed; the role of the DPCA here is to reliability reconstruct the embedding $\hat{Z}$s, which corresponds to the case described in \lmaref{lemma:boundDPCA} with the task matrix $\Phi$ as identity.


\textbf{Training procedure:}
During the training of NDPCA, the weights of the task are always frozen because it is usually a large-scale pre-trained model that is expensive to re-train. 
We aim to learn the $K$ neural encoders and the joint neural decoder which minimize the loss function: 
\begin{equation}
\mathcal{L}_{\mathrm{tot}} = 
\lambda_{\mathrm{task}} \underbrace{\|\Yhat - \Y \|_F^2}_{\mathrm{task~loss}}
+ \lambda_{\mathrm{rec}} \underbrace{\left(\|\Xhat{1} - \X{1}\|_F^2 + \|\Xhat{2} - \X{2}\|_F^2 + \dots \|\Xhat{K} - \X{K}\|_F^2 \right)}_{\mathrm{reconstruction~loss}}.
\label{eq:overall_loss}
\end{equation}
In the task-aware setting when $\lambda_{\mathrm{rec}}=0$, the neural autoencoder fully restores task-relevant features, which is the main focus of this paper.
When $\lambda_{\mathrm{task}}=0$, the neural autoencoder learns to reconstruct the data $\X{}$, which is the task-agnostic setting later compared in \secref{sec:exps}. 

We now discuss how to encourage NDPCA to work well under \textit{various available bandwidths} with DPCA during the training phase. 
We begin by making observations on the desired property of the neural embeddings arising from the limitations of the DPCA: (1) uncorrelatedness: \lmaref{lemma:boundDPCA} shows that DPCA is more efficient when the correlation among the intermediate representations is less. (2) linear compressibility: we encourage the neural autoencoder to generate low-rank representations, which can be compressed by only a few singular vectors, making them more bandwidth efficient.  

We tried to explicitly encourage the desired properties with additional terms in \Eqref{eq:overall_loss}, but they all adversely affect the task performance. 
To obtain uncorrelated representations, we tried penalizing the cosine similarity between the representations. We also tried similar losses that penalize correlation, as per \cite{VICREG,DecoupleAndSample,bousmalis2016DSN,DisentanglementVAE}, but none improves the task performance.
We observed that the autoencoder automatically learns representations with small correlation, and any explicit imposition of complete uncorrelatedness is too strong. 
For linear compressibility, we tried penalizing the convex low-rank approximation–the nuclear norm–of the representations, as per \cite{FactorizedOrthLatentSpaces,fazel2002matrix}. However, we observe a similar trend in the final task performance as the network tends to minimize the nuclear norm while harming the task performance. For the comparison of the resulting performance, see \appref{sec:ablation_norm}.



In this regard, we propose a novel linear compression module that allows us to adapt to DPCA during training rather than using additional terms in the loss.
We introduce a \textit{random-dimension} DPCA projection module to improve performance in lower bandwidths. 
It projects representations $\Z{}$ to a low dimension randomly chosen, simulating projections in various available bandwidths during inference. 
It can be interpreted as a differentiable singular value decomposition with a random dimension, described in \algoref{alg:pca_random_projection}. 
For encoding, it first normalizes the representations and performs singular value decomposition on all sources. Then, it sorts the vectors by the singular values and randomly selects the number of vectors to use for projection. 
For decoding, it decodes with the selected singular vectors again and denormalizes the data. 
Note that during training, we only run \algoref{alg:pca_random_projection} on a batch. 
This module helps to improve the overall performance over a range of bandwidths, and we show the ablation study of this module in \appref{sec:ablation_DPCA}.

\textbf{Inference:} 
With the training data, the DPCA projection module first saves the mean of representations $\Z{}$ and the encoder and decoder matrices in the maximum bandwidth.
It only needs to save for the maximum bandwidth because its rows and columns are already sorted by the singular values, which represent the importance of each corresponding vector.
During inference, when the current bandwidth is $\zdim{}$, it chooses the top $\zdim{}$ rows and columns of the saved encoders and decoder matrices to encode and decode representations. 
No retraining is needed for different bandwidths. Only the storage of a neural autoencoder and a linear matrix at each encoder and decoder is needed. 

\textbf{Robust task model:}
We pre-train the task model with randomly cropped and augmented images to make the model less sensitive to noise in the input image space, namely, the model has a smaller Lipschitz constant. This augmentation trick is based on \cite{nakanoya2023co}. 
A robust task model has a smaller Lipschitz constant, so it is less sensitive to the input noise injected by decompression when we concatenate it with the neural autoencoder. 
For a detailed analysis of the performance bounds between robust task and task-aware autoencoders, see \appref{sec:robust_aware}.

\begin{wrapfigure}{R}{0.7\textwidth}
\begin{minipage}{0.7\textwidth}
\vspace{-2.5em}
\setlength{\textfloatsep}{4em}
\begin{algorithm}[H]
  \footnotesize
  \caption{\small{Projection into a random low dimension using DPCA}}
  \label{alg:pca_random_projection}
\begin{algorithmic}[1]
    \State {\bfseries Input:} A size $b$ batch of latent representations $\Z{i}\in \mathbb{R}^{b\times\zdim{i}}$ from each source $i$, min and max bandwidth $\zdim{\text{min}}, \zdim{\text{max}}$
    \State {\bfseries Output:} Compressed representation $\Z{i}^{\zdim{}}$ of each source, reconstructed representation $\Zhat{}$ for all sources
    
    \Function{Encode}{$\Z{i}, \zdim{\text{min}}, \zdim{\text{max}}$}
    \For{each source $i$}
            \State $\bar{\Z{i}} \gets \Z{i} - \mathrm{Mean}(\Z{i})$ \Comment{Normalize representations}
            \State $s_i, V_i, H_i \gets \mathrm{SVD}( \bar{\Z{i}} )$ \Comment{Singular value decomposition}
        \EndFor
        \State $s ,V \gets \mathrm{Cat}(s_i)$, $\mathrm{Cat}(V_i)$ \Comment{Concatenate singular values and vectors}
        \State $\zdim{} \gets \mathrm{Rand}(\zdim{\text{min}}, \zdim{\text{max}})$  \Comment{Randomly choose projection dimension}
        \State $ s^{\zdim{}}, V^{\zdim{}} \gets \arg\max( [s, V], {\zdim{}})$
        \Comment{Select the top $\zdim{}$ values of $s$}
        \For{each source $i$}
            \State $V_i^{\zdim{}} \gets \{ V | V \in V^{\zdim{}}, V \in V_i \}$ \Comment{Select $\zdim{}$ vectors from sources}
            \State $\Z{i}^{\zdim{}} = \bar{\Z{i}}\times V_i^{\zdim{}}$ \Comment{Project $\Z{i}$ to lower dimensions}
        \EndFor
    \State \Return $\Z{\text{low}} \gets \mathrm{Cat}(\Z{i}^{\zdim{}})$ \Comment{Return Compressed representation }
    \EndFunction
    \Function{Decode}{$\Z{i}^{\zdim{}}$}
    \For{each source $i$}
    \State $\hat{\bar{\Z{i}}} \gets \Z{i}^{\zdim{}} \times \mathrm{Cat}(V_i^{\zdim{}})^\top$ \Comment{Decompressed representation}
    \State $\hat{Z{i}} \gets \hat{\bar{\Z{i}}} + \mathrm{Mean}(\Z{i})$  \Comment{Denormalize representations}
    \EndFor
    \State \Return $\hat{\Z{}} \gets \mathrm{Cat}(\hat{Z{i}})$ \Comment{Return reconstructed representations}
    \EndFunction
\end{algorithmic}
\end{algorithm}
\vspace{-2em}
\end{minipage}
\end{wrapfigure}




\section{Experiments}

\label{sec:exps}
We consider three different tasks to test our framework: ($a$) the denoising of CIFAR-10 images~\citep{CIFAR10}, ($b$) multi-view robotic arm manipulation~\citep{FERM}, which we refer to as the \textit{locate and lift} task, and ($c$) object detection on satellite imagery~\citep{airbusKaggle}. Across all the experiments, 
we assume that there are two data sources, referred to as \textit{views}, each containing partial information relevant to the task. 
We present our results based on the testing set and refer to our proposed method, task-aware NDPCA, as NDPCA for simplicity.
NDPCA includes a single autoencoder with a large dimension of representations $\Z{}\in \mathbb{R}^{2*\zdim{\mathrm{max}}}$. It then compresses representations and allocates bandwidth via DPCA, as discussed previously. 
We show that NDPCA can bridge the performance gap between distributed autoencoders and joint autoencoders, defined below, to allocate bandwidth and avoid transmitting task-irrelevant features. 
We also provide experiments of NDPCA with more than $2$ data sources in \appref{sec:4_sources} to demonstrate NDPCA's capability in more complicated settings. 

\textbf{Baselines:} 
We compare NDPCA against three major baselines. 
First is the task-aware joint autoencoder (JAE), where a single pair of encoder and decoder compresses both views. JAE is considered an upper bound of NDPCA since it can leverage the correlation between both views while avoiding encoding redundant information. Next is the task-aware vanilla distributed autoencoder (DAE), where two encoders independently encode the corresponding views to equal bandwidths and a joint decoder decodes the data. DAE is considered a lower bound of NDPCA since 
    both encoders utilize the same bandwidth regardless of the importance of the views for the task, while NDPCA allocates bandwidths in a task-aware manner. 
%
Last is the task-agnostic NDPCA which differs from NDPCA in the training loss of reconstructing the original views. Due to the novelty of the problem formulation, we cannot make a fair comparison with any of the existing approaches. For instance, ~\cite{balle2016end,balle2018variational,NDIC, zhang2023ldmic} focus purely on distributed compression of images for reconstruction and human perception, whereas ~\cite{TaskAwareJPEGImage,TaskDependentCodebook} focus on task-oriented compression but are limited for a single source. Additionally, none of the previous works consider datasets of unequal importance, again making any performance comparison unfair. Hence we focus mainly on an ablation study style of comparison of NDPCA, clearly highlighting and validating the advantages of our approach.

\begin{figure*}[t]
\centering
\begin{subfigure}{0.23\textwidth}
    \centering
    \captionsetup{width=1.1\linewidth}
    \includegraphics[width=\textwidth]{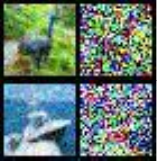}
    \caption{\small{CIFAR-10: view $1$ is less corrupted and thus contains more information about the original images.}}
\end{subfigure}
\hspace{4em}
\begin{subfigure}{0.23\textwidth}
    \centering
    \captionsetup{width=1.3\linewidth}
    \includegraphics[width=\textwidth]{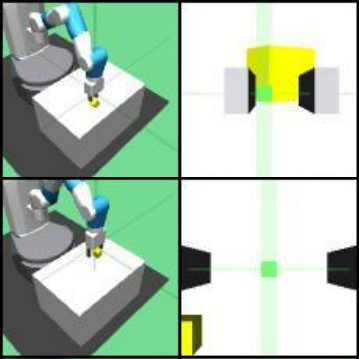}
    \caption{\small{Locate and lift: Side-view (column $1$) faintly captures the absolute position of objects, in contrast to the arm-view (column $2$).}}
\end{subfigure}
\hspace{4em}
\begin{subfigure}{0.23\textwidth}
    \centering
    \captionsetup{width=1.1\linewidth}
    \includegraphics[width=\textwidth]{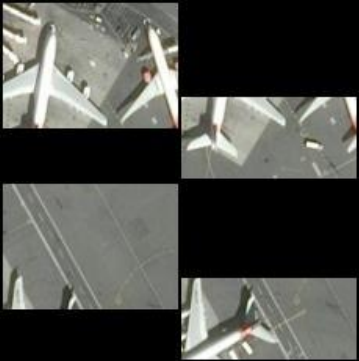}
    \caption{\small{Airbus detection: view $1$ and view $2$ observe different parts of the complete view with overlap.}}
\end{subfigure}
\caption{\small{\textbf{Datasets:}  (column $1$) view $1$.  (column $2$) view $2$. 
In all experiments, both views are correlated, but one view is more important than the other as it contains more information relevant to the task.}} 
\label{fig:img_example}
\vspace{-2em}
\end{figure*}
\textbf{CIFAR-10 denoising:} We first consider a simple task of denoising CIFAR-10 images using two noisy observations of the same image, shown in \figref{fig:img_example} (a). We use CIFAR-10 as a toy example to clearly highlight the advantage of NDPCA in the presence of sources with unequal importance to the task. Due to the simplistic nature of the classification task, which only requires 4 bits (digit 0-9) as the information bottleneck, we choose denoising as our “task”, making it more suitable to showcase the performance across a range of available bandwidth. Here, the importance of each observation, or view, for the task is simply the noise level. 
For view 1, we consider an image corrupted with additive white Gaussian noise (AWGN) with a variance of ${0.1}^2$. And view 2 is highly corrupted by AWGN with a variance of $1$. All the images were normalized to $[0,1]$ before adding the noise. 
We compressed the noisy observations and passed the reconstructed images through a pre-trained denoising network. We then computed the final peak signal-to-noise ratio (PSNR) with respect to the clean image. 
Since the noise levels of both views are unequal, the importance of the task is unequal as well. The optimal bandwidth allocation should not be equal, thus showing the advantage of NDPCA.
Although view $1$ contains more information, not all bandwidth should be allocated to view $1$.
This problem is called the CEO problem \cite{CEO,RateCEO}.
In fact, even if one view is highly corrupted, we should still leverage that view and never allocate $0$ bandwidth to it.
We discuss why it is the case in \appref{sec:CEO}. 


\textbf{Locate and lift:} For the manipulation task, we consider a scenario in which a simulated $6$ degrees-of-freedom robotic arm controlled by a reinforcement learning agent inputs two camera views to locate and lift a yellow brick. 
We call the view from the robotic arm "arm-view" and the one recording the whole desk "side-view", as shown in \figref{fig:img_example} (b).
The two views are complementary to completing the task, details discussed in \appref{sec:ablation_lift}.
We trained the agent in a supervised-learning manner. We collected a dataset of observation and action pairs \cite{zhao2023learning} and trained an agent from the dataset. 
Then, we defined task loss as the $L_2$ norm of actions from images with and without compression and trained NDPCA to minimize the task loss through the agent. 
Literature calls this training method "behavior cloning" \cite{BehavioralCloning} as it learns from demonstrations.
Behavior cloning causes a drop in performance, but this paper only focuses on the performance degradation caused by compression, so we treat the behavior cloning agent with uncompressed views as the upper bound of our method. 

\textbf{Airbus detection:} This task considers using satellite imagery to locate Airbuses. Satellites observe overlapping images of an airport and transmit data to Earth through limited bandwidth, as shown in \figref{fig:img_example} (c). We crop all images in the dataset into smaller pieces ($224\times 224$ pixels). The two data sources are the upper $160$ pixels (source $1$) and the lower $104$ pixels of the image (source $2$) with $40$ pixels overlapped. 
Our object detection model follows the paper "You Only Look Once" (Yolo) \cite{Yolo}. 
The task loss here is the difference between object detection loss with and without compression. 

\begin{figure*}[t]
\centering
\includegraphics[width=1\textwidth]{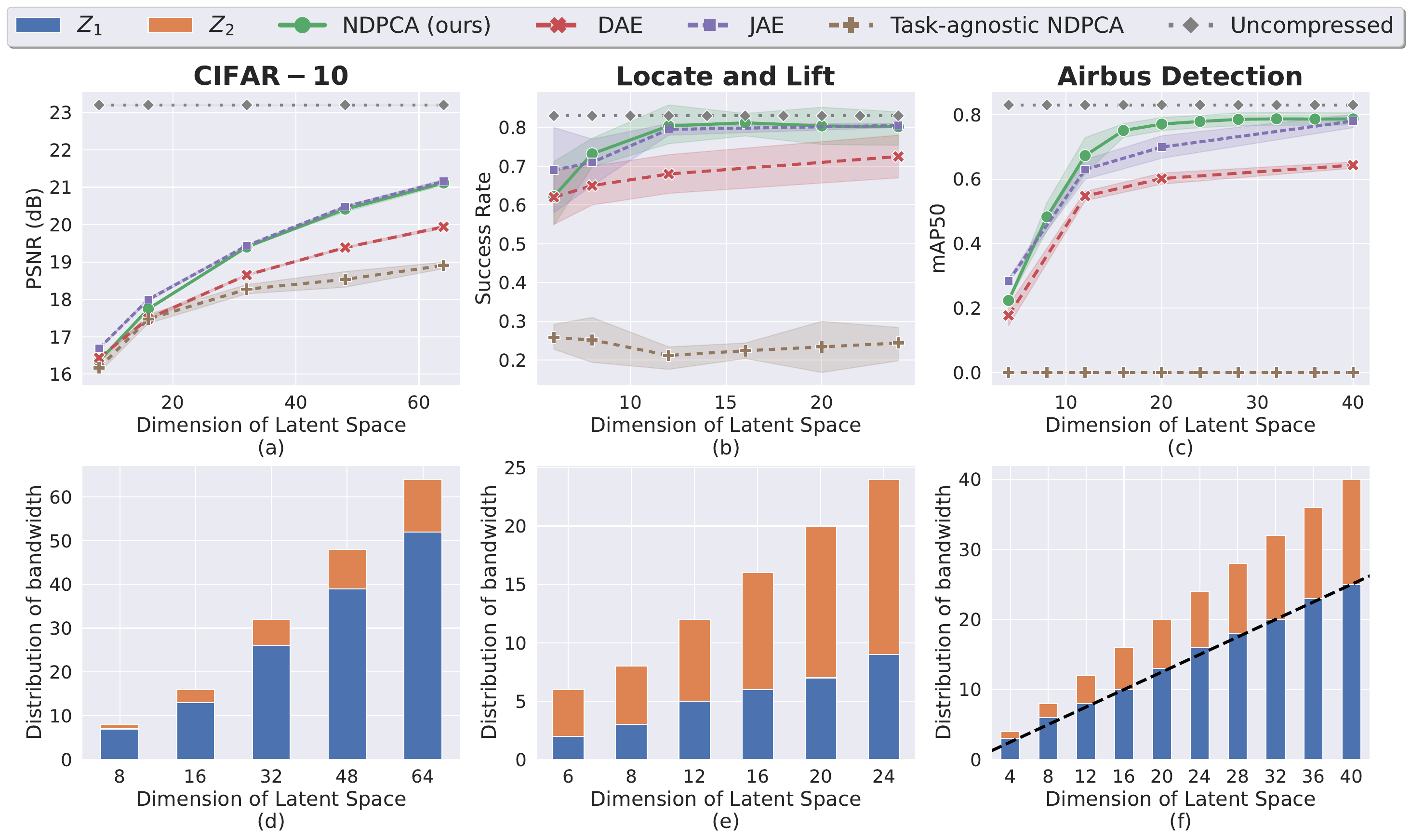}
\caption{\small{\textbf{Top:} Performance Comparison for 3 different tasks. Our method achieves equal or higher performance than other methods. \textbf{Bottom:} Distribution of total available bandwidth (latent space) among the two views for NDPCA (ours). The unequal allocation highlights the difference in the importance of the views for a given task.}}
\label{fig:results}
\vspace{-2em}
\end{figure*}


\begin{wrapfigure}{R}{0.6\textwidth}
\vspace{-1em}
\begin{subfigure}{0.6\textwidth}
    \captionsetup{width=1.0\linewidth}
    \includegraphics[width=\textwidth]{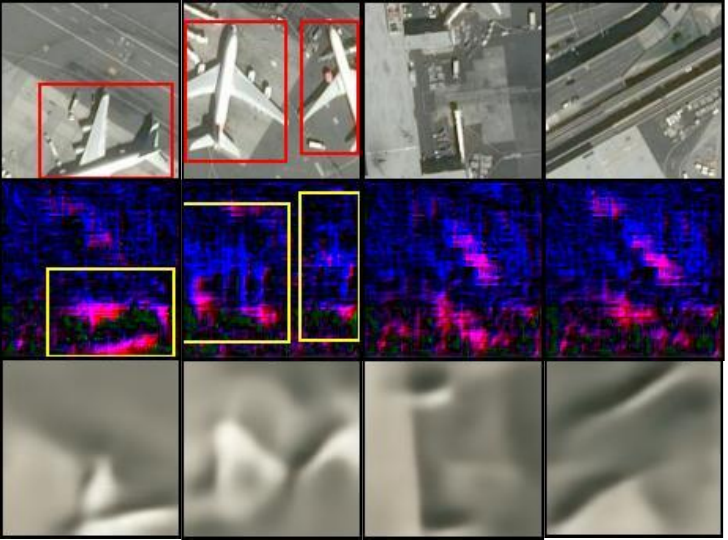}
\end{subfigure}
\caption{\small{\textbf{Task-aware v.s. task-agnostic:} 
     Ground-truth bounding boxes are red (row $1$), while detected boxes of task-aware are yellow (row $2$). Nothing is detected in the task-agnostic setting (row $3$). Task-agnostic images are perceptible to human eyes, while task-aware images capture task-relevant features, thus imperceptible to human eyes.}}
\label{fig:task-compare}
\vspace{-1em}
\end{wrapfigure} 
\textbf{Results:}
Our key results are: (1) Task-aware NDPCA outperforms task-agnostic NDPCA, and (2) bandwidth allocation should be related to the importance of the task. 
Across all experiments, shown in \figref{fig:results}(a)-(c), we see that task-aware NDPCA performs much better than task-agnostic NDPCA and DAE, which equally allocates bandwidths. 
We see from \figref{fig:results} that task-aware NDPCA provides a graceful performance degradation with respect to available bandwidth, with no additional training or storage of multiple models. 
On the other hand, DAE and JAE require retraining for every level of compression, so every sample point in the plot is a different model.

\figref{fig:results}(a) shows the results of denoising CIFAR-10 with NPDCA trained at $(\zdim{\mathrm{min}},\zdim{\mathrm{max}})=(8,64)$. 
Although view $1$ is more important than view $2$, DAE can only equally allocates bandwidth to both sources. NDPCA compresses the data and flexibly allocates bandwidths, as shown in \ref{fig:results}(d), where we can see that $\Z{1}$ has more bandwidth than $\Z{2}$. NDPCA results in $1.2$ dB gain in PSNR compared to DAE when $\zdim{}=64$.

\figref{fig:results}(b) shows the results of the locate and lift task with NPDCA trained at $(\zdim{\mathrm{min}},\zdim{\mathrm{max}})=(8,48)$. We set the length of an episode as $50$ time steps and measure the success rate in $100$ episodes. 
We show the upper bound, a behavior cloning agent without compression, in gray dotted lines.
The arm view is more important as it captures the precise location of the brick, and as expected, NDPCA allocates more bandwidth to the arm-view ($\Z{2}$), as seen in \figref{fig:results}(e). We see that NDPCA has a $9\%$ higher success rate compared to DAE when $\zdim{}=24$.

\figref{fig:results}(c) shows the results of the Airbus detection with NPDCA trained at $(\zdim{\mathrm{min}},\zdim{\mathrm{max}})=(8,40)$. We measured the mean average precision (mAP) with $40\%$ confidence score and $50\%$ intersection over the union as the thresholds. 
We show the uncompressed upper bound in gray dotted lines. NDPCA results in up to $14\%$ gain in mAP50 compared to DAE. 
In \figref{fig:results} (f), we plotted the ratio of the areas of both views, while equally splitting the overlapping part, in a dashed black line.
Surprisingly, NDPCA's empirical allocation of bandwidth is highly aligned with the theoretical ratio, supporting that it captures the importance of the task and allocates bandwidth according to it. 

\textbf{Comparison of NDPCA with JAE}: JAE uses the information from both views simultaneously to capture the best joint embedding for the task. In an ideal scenario, JAE will be the upper bound for the performance and hence easily performs better than DAE across all the experiments. Interestingly, in \figref{fig:results}(b) and (c), we see that NDPCA outperforms not only DAE but also JAE as well. We attribute it to the better representations present in higher-dimension latent space. It turns out that learning a high-dimensional representation and then projecting to a lower dimension space, like NDPCA, is more efficient compared to directly learning a low-dimensional representation, like JAE. This projection from higher dimensional to lower dimensional is similar to pruning large neural networks to identify effective sparse sub-networks.~\cite{frankle2018lottery,ye2020good}. We also note that Low-Rank Adaptation (LoRA)~\cite{hu2021lora} technique for large language models can be thought of as a similar approach.






\textbf{Task-aware v.s. task-agnostic:}
We plotted the reconstructed images of task-aware ($\lambda_{\mathrm{rec}}=0$) and task-agnostic ($\lambda_{\mathrm{task}}=0$) NDPCA in \figref{fig:task-compare}. Task-aware images are imperceptible to human eyes since they restore features of a non-linear task model, aligning with the results in \cite[Fig. 4]{nakanoya2023co}. 
For discussion of non-zero $\lambda_{\mathrm{task}}$ and $\lambda_{\mathrm{rec}}$, we refer readers to \appref{sec:weighted_task_loss}.

\textbf{Limitations:}
In general, autoencoders are poor at generalizing to out-of-distribution data and the drawback translates to NDPCA as well. When the testing set is noticeably different from the training set, the performance of NDPCA can get noticeably lower. Additionally, during training, DPCA performs the singular value decomposition in the training set. The decomposition operation can become ill-conditioned and unstable if the batch size is too small. An alternative approach could be a parametric low-rank decomposition such as LoRA~\cite{hu2021lora} or using adapter networks~\cite{houlsby2019parameter}, although the complexity increases and the compatibility with DPCA remains to be explored.


\section{Related Work}
\textbf{Information theoretic perspective:}
Slepian and Wolf \textit{et al.} are the first to obtain the minimum bandwidth of distributed sources to perfectly reconstruct data \citep{slepian_wolf}. However, they use exponentially complex compressors while assuming that the joint distribution of sources is known, which is impractical. 
In the presence of a task, finding the rate region of two binary sources has remained an open problem, even for modulo-two sum tasks \cite{modulotwo}. 
In terms of imperfectly reconstructing data with neural autoencoders, previous works consider compression of the original data to a fixed dimension \cite{whang2021neuralDSC,DRASIC}, while our work focuses on compressing data to any bandwidth with a task model. 

\textbf{Task-aware compression:} 
Real-world data, such as images or audio, are ubiquitous and high-dimensional, while downstream tasks that input the data only utilize certain features for the output.
Task-aware compression aims to compress data while maximizing the performance of a downstream task. Previous works analyze linear task \cite{cheng2022taskaware}, image compression \cite{TaskDependentCodebook, TaskAwareJPEGImage, nakanoya2023co,lossyCompressionLosslessPrediction}, future prediction \cite{cheng2021data}, and data privacy \cite{li2022differentially, TaskawarePrivacy}, while ours compresses distributed sources under limited bandwidth.

\textbf{Neural autoencoder:}
Previous works show the ability of neural autoencoders to generate meaningful and uncorrelated representations.
Instead of adding additional loss terms during training like \cite{VICREG, DecoupleAndSample, DisentanglementVAE, bousmalis2016DSN, CollaborativeEdge}, we use a random projection module to help a neural autoencoder learn uncorrelated and linear-compressible representations. 
Other works focus on designing new neural architectures for multi-view image compression \cite{zhang2023ldmic, NDIC}, while ours focuses on the framework to compress data to different compression levels. 
We choose autoencoders instead of variational autoencoders \cite{VAE, higgins2017betavae} because we focus on the compression of fixed representations rather than generative tasks from latent distributions. Also, autoencoders are more compatible with DPCA than variational autoencoders.

\section{Conclusion and Future Work}
We proposed a theoretically grounded linear distributed compressor, DPCA, and analyzed its performance compared to the optimal joint compressor. 
Then, we designed a distributed compression framework called NDPCA by combining a neural autoencoder and DPCA to allocate bandwidth according to their importance to the task.
Experiments on CIFAR-10 denoising, locate and lift, and Airbus detection showed that NDPCA near-optimally outperforms task-agnostic or equal-bandwidth compression schemes. 
Moreover, NDPCA requires only one model and does not need to be retrained for different compression levels, which makes it suitable for settings with dynamic bandwidths. 


Avenues for future research include settings where the information flow is not unidirectional but bidirectional, such that the encoders and the decoder can communicate to compress data better. 
Discovering representations in a more complex space using kernel PCA instead of linear PCA and exploration of more complex non-linear correlations 
are also left as interesting 
future work. Another interesting direction to expand the work would be analyzing the robustness of the representations, both in the latent space with respect to corruption such as additive white Gaussian noise (AWGN) as well as with respect to the downstream task model. The current framework learns task-relevant features that are tied to the task model but the performance is expected to drop significantly when the task model is updated or changed. Hence, it is desirable to incorporate robust and transferable properties into the features learned.

\section*{Acknowledgement}

This work was supported in part by the National Science Foundation 2148186, NASA 80NSSC21M0071, ARO Award W911NF2310062, ONR Award N00014-21-1-2379, NSF Award CNS-2008824, and Honda Research Institute through 6G$@$UT center within the Wireless Networking and Communications Group (WNCG) at the University of Texas at Austin. Any opinions, findings, and conclusions or recommendations expressed in this material are those of the authors and do not necessarily reflect the views of the National Science Foundation.


\clearpage
\balance
\printbibliography


\newpage
\appendix
\begin{center}
{\bf {\Large Appendix} }
\end{center}
\appendix    
\begin{description}
    \item[\appref{sec:aproof}] provides proofs of lemmas.
    \item[\appref{sec:CEO}] provides additional details on the CEO problem for Gaussian sources.
    \item[\appref{sec:4_sources}] provides an additional experiment of NDPCA with more sources.
    \item[\appref{sec:app_datasets}] provides details of the datasets.
    \item[\appref{sec:app_imple}] provides implementation details.
    \item[\appref{sec:ablation}] provides an ablation study with norms on training loss, the DPCA module, and the views in locate and lift.
\end{description}

\section{Proofs of Lemmas}
\label{sec:aproof}
\subsection{Bounds of DPCA}
\label{sec:app_proof}
\begin{lemma-non}[Bounds of DPCA Reconstruction]
Given a zero-mean data matrix and its covariance, 
$$X = \begin{bmatrix}
X_1 \\
X_2
\end{bmatrix} \in \mathbb{R}^{(\xdim{1}+\xdim{2})\times N}, 
XX^\top =  \underbrace{\begin{bmatrix}
\mathrm{Cov}_{11} & \mathbf{0} \\
\mathbf{0} & \mathrm{Cov}_{22}
\end{bmatrix}}_{X_{\mathrm{diag}}}
+ \underbrace{\begin{bmatrix}
\mathbf{0} & \mathrm{Cov}_{12} \\
\mathrm{Cov}_{21} & \mathbf{0} 
\end{bmatrix}}_{\Delta X}, $$
assume that $\Delta X$ is relatively smaller than $XX^\top$, and $XX^\top$ is positive definite with distinct eigenvalues.
For PCA's encoding and decoding matrices $E_{\mathrm{PCA}}, D_{\mathrm{PCA}}$ and DPCA's encoding and decoding matrices $E_{\mathrm{DPCA}}, D_{\mathrm{DPCA}}$, the difference of the reconstruction losses is bounded by 
\begin{align*}
0 \leq 
\| X - D_{\mathrm{DPCA}}\ E_{\mathrm{DPCA}}(X)\|_2^2 - \| X - D_{\mathrm{PCA}} E_{\mathrm{PCA}}(X)\|_2^2
= - \sum_{i=\zdim{}+1}^{\xdim{1}+\xdim{2}} \lambda_ie_i^\top \Delta X e_i.
\end{align*}
where $\lambda_i$ and $e_i$ are the $i$-th largest eigenvalue and eigenvector of $XX^\top$, $\mathrm{Tr}$ is the trace function, and $\zdim{}$ is the dimension of the compression bottleneck. 
\end{lemma-non}
\begin{proof}
The lower bound is intuitive. We know that DPCA cannot outperform PCA since distributed coding cannot outperform joint coding and PCA is the optimal linear encoding. The reconstruction loss of PCA is always not greater than the loss of DPCA, thus the lower bound is $0$. 
Now consider the upper bound: 
\begin{align*}
& \; \| X - D_{\mathrm{DPCA}}E_{\mathrm{DPCA}}X\|_2^2 - \| X - D_{\mathrm{PCA}} E_{\mathrm{PCA}}X\|_2^2 \\
= & \;\mathrm{Tr}(XX^\top + D_{\mathrm{DPCA}}E_{\mathrm{DPCA}}X\left(D_{\mathrm{DPCA}}E_{\mathrm{DPCA}}X\right)^\top -2 D_{\mathrm{DPCA}}E_{\mathrm{DPCA}}XX^\top) \\
& \; - \sum_{i=\zdim{}+1}^{\xdim{1}+\xdim{2}} \lambda_i(XX^\top) \\
= & \; \mathrm{Tr}(X_{\mathrm{diag}} + \Delta X + D_{\mathrm{DPCA}}E_{\mathrm{DPCA}} X \left(D_{\mathrm{DPCA}}E_{\mathrm{DPCA}}X\right)^\top -2 D_{\mathrm{DPCA}}E_{\mathrm{DPCA}}XX^\top) \\
& \quad - \sum_{i=\zdim{}+1}^{\xdim{1}+\xdim{2}} \lambda_i(XX^\top) \\
= & \; \mathrm{Tr}(\Delta X + E_{\mathrm{DPCA}}^\top D_{\mathrm{DPCA}}^\top D_{\mathrm{DPCA}}E_{\mathrm{DPCA}} \Delta X -2 D_{\mathrm{DPCA}}E_{\mathrm{DPCA}} \Delta X) \\
& \quad + \sum_{i=\zdim{}+1}^{\xdim{1}+\xdim{2}} \lambda_i(X_{\mathrm{diag}}) - \lambda_i(XX^\top)\\ 
= & \sum_{i=\zdim{}+1}^{\xdim{1}+\xdim{2}} \lambda_i(X_{\mathrm{diag}}) - \lambda_i(XX^\top).
\end{align*}
Finally, we use the matrix perturbation theory \cite{rellich1969perturbation} to calculate the first-order approximation of the effect of $\Delta X$ on the singular values of $X_{\mathrm{diag}}$. The perturbation theory assumes that the perturbation $\Delta X$ is relatively small compared to $X_{\mathrm{diag}}$. Then, we know:
\begin{align*}
\; \| X - D_{\mathrm{DPCA}}E_{\mathrm{DPCA}}X\|_2^2 - \| X - D_{\mathrm{PCA}} E_{\mathrm{PCA}}X\|_2^2 
= & \sum_{i=\zdim{}+1}^{\xdim{1}+\xdim{2}} \lambda_i(X_{\mathrm{diag}}) - \lambda_i(XX^\top) \\
\leq 
& \sum_{i=\zdim{}+1}^{\xdim{1}+\xdim{2}} \lambda_i - \lambda_i - \lambda_ie_i^\top \Delta X e_i \\
= & 
- \sum_{i=\zdim{}+1}^{\xdim{1}+\xdim{2}} \lambda_ie_i^\top \Delta X e_i. 
\end{align*}
\end{proof}

Note that the encoding and decoding matrices of DPCA look like:
$$D_{\mathrm{DPCA}}=
\begin{bmatrix}
D_1 & \mathbf{0} \\
\mathbf{0} & D_1
\end{bmatrix},
E_{\mathrm{DPCA}}=\begin{bmatrix}
E_1 & \mathbf{0} \\
\mathbf{0} & E_2,
\end{bmatrix}$$
where $E_1, E_2, D_1, D_2$ are matrices obtained from each source with DPCA. 

\begin{figure*}[ht]
\centering
\includegraphics[width=0.5\textwidth]{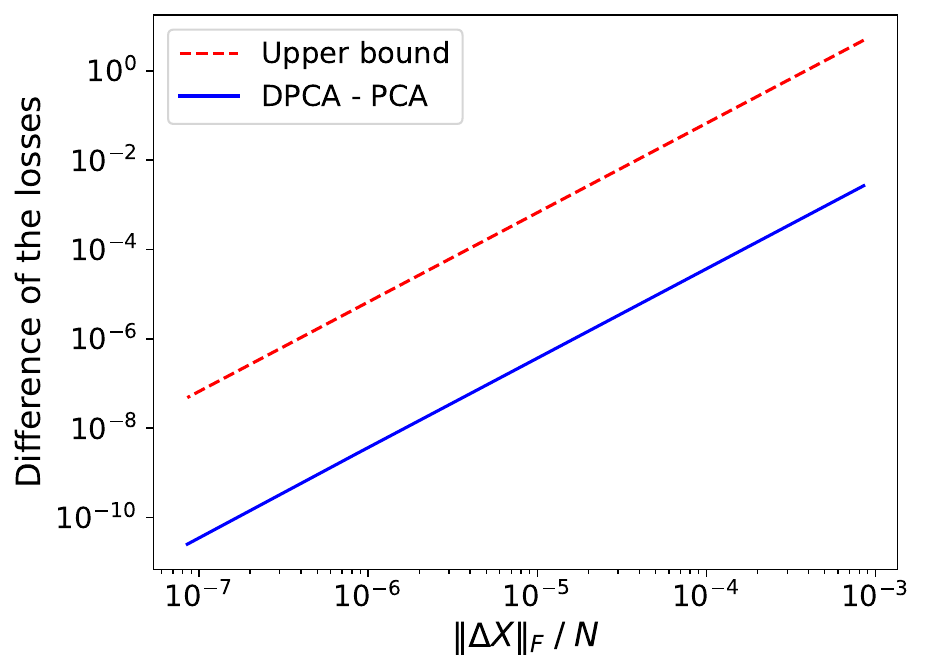}
\caption{
\small{\textbf{Bound from \lmaref{lemma:boundDPCA}:} The obtained upper bound is always larger than the difference of losses of DPCA and PCA.}}
\label{fig:bound}
\vspace{-1em}
\end{figure*}
We examine the correctness of our bound with random data matrices in \figref{fig:bound}. We can see that the gap between DPCA and PCA decreases as the Frobenius norm of $\Delta\X{}$ decreases. The upper bound also has the same trend, while it is always larger than the exact value. Note that in \figref{fig:bound}, all axes are in log scale.


\subsection{Why Robust Task?}
\label{sec:robust_aware}
We now characterize the effect of using task-aware compression and a pre-trained, robust task. We assume that the robust task performs similarly to the original, non-robust task. We also know that the robust task has a lower Lipschitz constant than the non-robust one \cite{RobustNeuralNetworks, CertifiablyRobustNeuralNetworks}.
We denote the robust task model by $\task^*$ and the non-robust task model by $\task$.
We define task-aware autoencoder as 
\begin{align*}
D_{\mathrm{awa}}, E_{\mathrm{awa}} = \arg\min_{D,E} & \quad \|\task^*(x) - \task^*\circ D \circ E(x)\|_2^2 \\
\mathrm{subject\;to} & \quad E(x) \in \mathbb{R}^{\task},
\end{align*}
and task-agnostic autoencoder as
\begin{align*}
D_{\mathrm{agn}}, E_{\mathrm{agn}} = \arg\min_{D,E} & \quad \|x - D \circ E(x)\|_2^2 \\
\mathrm{subject\;to} & \quad E(x) \in \mathbb{R}^{\task},
\end{align*}
where $\circ$ denotes function composition.
For simplicity, we further define 
$$\hat{x}_{\mathrm{awa}} = D_{\mathrm{awa}}\circ E_{\mathrm{awa}}(x) , \quad \hat{x}_{\mathrm{agn}} = D_{\mathrm{agn}}\circ E_{\mathrm{agn}}(x).$$
Then, we prove the following lemma: 
\begin{lemma}[Why task-aware compression and a robust task]
\label{lemma:robust_aware}
Assume robust task model $\task^*$ and non-robust task $\task$ only differ in: 
\begin{equation}
\forall x, ~~ \|\task^*(x)-\task(x)\|\leq\epsilon.
\end{equation}
That is, the robust task and the normal task have a bounded performance gap. Assume that $\task^*$ is a Lipschitz function with constant $L^*$, and $\task$ is a bi-Lipschitz function with constant $L$. Namely,
\begin{equation}
\|\task^*(x)-\task^*(\tilde{x})\|_2 \leq L^*\|x-\tilde{x}\|_2,
\label{lip_task}
\end{equation}
and 
\begin{equation}
\frac{1}{L}\|x-\tilde{x}\|_2 \leq \|\task(x)-\task(\tilde{x})\|_2 \leq L\|x-\tilde{x}\|_2. \label{bilip_task}
\end{equation}
We show that the task losses of task-aware, robust models and task-agnostic, non-robust models are bounded by 
\begin{equation}
\begin{aligned}
& \|\task^*(x) - \task^*(\hat{x}_{\mathrm{awa}})\|_2 - L^*\|x - \hat{x}_{\mathrm{awa}} \|_2 + \frac{1}{L}\|x - \hat{x}_{\mathrm{agn}} \|_2 \\
\leq & \|\task(x)-\task(\hat{x}_{\mathrm{agn}} )\|_2 \\
\leq & \|\task^*(x) - \task^*(\hat{x}_{\mathrm{awa}})\|_2 + 2\epsilon + L^*\|\hat{x}_{\mathrm{awa}} - \hat{x}_{\mathrm{agn}}\|_2 . 
\label{eq:bnd_aware_robust}
\end{aligned}
\end{equation}
\end{lemma}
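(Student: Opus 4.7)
The plan is to prove the upper and lower inequalities separately, each via a short chain of triangle inequalities combined with the Lipschitz/bi-Lipschitz hypotheses and the uniform closeness $\|\Phi^*(x)-\Phi(x)\|\leq\epsilon$. Nothing beyond elementary normed-space manipulations is needed; the work is mostly bookkeeping to make sure the right intermediate points are inserted.

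For the upper bound, I would insert both the robust task model $\Phi^*$ (at the two endpoints $x$ and $\hat{x}_{\mathrm{agn}}$) and the task-aware reconstruction $\hat{x}_{\mathrm{awa}}$ (as an interior point in argument space), writing the telescoping decomposition
\begin{align*}
\Phi(x)-\Phi(\hat{x}_{\mathrm{agn}})
&= \bigl[\Phi(x)-\Phi^*(x)\bigr] + \bigl[\Phi^*(x)-\Phi^*(\hat{x}_{\mathrm{awa}})\bigr] \\
&\quad + \bigl[\Phi^*(\hat{x}_{\mathrm{awa}})-\Phi^*(\hat{x}_{\mathrm{agn}})\bigr] + \bigl[\Phi^*(\hat{x}_{\mathrm{agn}})-\Phi(\hat{x}_{\mathrm{agn}})\bigr].
\end{align*}
After taking norms and applying the triangle inequality, the first and last bracketed terms each contribute at most $\epsilon$ by the closeness hypothesis, the second bracketed term is exactly the task-aware robust loss $\|\Phi^*(x)-\Phi^*(\hat{x}_{\mathrm{awa}})\|_2$, and the third is bounded by $L^*\|\hat{x}_{\mathrm{awa}}-\hat{x}_{\mathrm{agn}}\|_2$ using the Lipschitz property \eqref{lip_task} of $\Phi^*$. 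Summing these four estimates yields the stated upper bound with $2\epsilon$.

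For the lower bound, the essential ingredient is the left half of the bi-Lipschitz hypothesis \eqref{bilip_task} applied to $\Phi$ at the pair $(x,\hat{x}_{\mathrm{agn}})$, which immediately gives $\|\Phi(x)-\Phi(\hat{x}_{\mathrm{agn}})\|_2 \geq \tfrac{1}{L}\|x-\hat{x}_{\mathrm{agn}}\|_2$. To reshape this into the exact form stated in the lemma, I would note that the extra additive quantity $\|\Phi^*(x)-\Phi^*(\hat{x}_{\mathrm{awa}})\|_2 - L^*\|x-\hat{x}_{\mathrm{awa}}\|_2$ is non-positive by the upper Lipschitz bound \eqref{lip_task} on $\Phi^*$, so adjoining it to the right-hand side only weakens the bound and keeps the inequality valid.

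The main obstacle is not analytical but combinatorial: one has to choose the telescoping chain so that $\Phi^*$ appears at both endpoints (to produce the $2\epsilon$ rather than a single $\epsilon$) and so that the intermediate Lipschitz step is taken between $\hat{x}_{\mathrm{awa}}$ and $\hat{x}_{\mathrm{agn}}$ (not between $x$ and either reconstruction), which is what yields the term $L^*\|\hat{x}_{\mathrm{awa}}-\hat{x}_{\mathrm{agn}}\|_2$ in \eqref{eq:bnd_aware_robust} rather than one involving $\|x-\hat{x}_{\mathrm{awa}}\|_2$. Once the correct decomposition is written down, each inequality is a one-line consequence of the hypotheses.
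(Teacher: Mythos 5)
Your proposal is correct and follows essentially the same route as the paper's proof: the paper's upper bound is your four-term telescoping through $\Phi^*(x)$, $\Phi^*(\hat{x}_{\mathrm{awa}})$, and $\Phi^*(\hat{x}_{\mathrm{agn}})$, merely presented via the reverse triangle inequality on the difference of losses, and its lower bound is obtained by subtracting the $\Phi^*$ Lipschitz bound from the lower bi-Lipschitz bound on $\Phi$, which is exactly your observation that $\|\Phi^*(x)-\Phi^*(\hat{x}_{\mathrm{awa}})\|_2 - L^*\|x-\hat{x}_{\mathrm{awa}}\|_2 \le 0$ can be adjoined without breaking the inequality. No gaps.
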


\begin{proof}
We consider the difference between the two task losses. By the triangle inequality, 
\begin{equation}
\begin{aligned}
& \|\task(x)-\task(\hat{x}_{\mathrm{agn}})\|_2 - \|\task^*(x)-\task^*(\hat{x}_{\mathrm{awa}})\|_2 \\
\leq & \|\task(x) - \task(\hat{x}_{\mathrm{agn}}) - \task^*(x) + \task^*(\hat{x}_{\mathrm{awa}})\|_2 \\
= & \|\task(x) - \task^*(x) + \task^*(\hat{x}_{\mathrm{awa}}) - \task(\hat{x}_{\mathrm{agn}})\|_2 \\
\leq & \|\task(x) - \task^*(x)\|_2 + \|\task^*(\hat{x}_{\mathrm{awa}}) - \task(\hat{x}_{\mathrm{agn}})\|_2 \\
\leq & \epsilon + \|\task^*(\hat{x}_{\mathrm{awa}}) - \task^*(\hat{x}_{\mathrm{agn}}) + \task^*(\hat{x}_{\mathrm{agn}}) - \task(\hat{x}_{\mathrm{agn}})\|_2 \\
\leq & \epsilon + \|\task^*(\hat{x}_{\mathrm{awa}}) - \task^*(\hat{x}_{\mathrm{agn}})\|_2 + \|\task^*(\hat{x}_{\mathrm{agn}}) - \task(\hat{x}_{\mathrm{agn}})\|_2 \\
\leq & 2\epsilon + L^*\|\hat{x}_{\mathrm{awa}} - \hat{x}_{\mathrm{agn}}\|_2 . \label{upper_bnd}
\end{aligned}
\end{equation}
On the other hand, subtracting \Eqref{lip_task} and \Eqref{bilip_task}, we get
\begin{equation}
\begin{aligned}
\|\task^*(x) - \task^*(\hat{x}_{\mathrm{awa}})\|_2 - \|\task(x)-\task(\hat{x}_{\mathrm{agn}} )\|_2 \\
\leq L^*\|x - \hat{x}_{\mathrm{awa}} \|_2 - \frac{1}{L}\|x - \hat{x}_{\mathrm{agn}} \|_2. \label{lower_bnd}
\end{aligned}
\end{equation}
Finally, combining \Eqref{upper_bnd} and \Eqref{lower_bnd}, we get 
\begin{equation*}
\begin{aligned}
& \|\task^*(x) - \task^*(\hat{x}_{\mathrm{awa}})\|_2 - L^*\|x - \hat{x}_{\mathrm{awa}} \|_2 + \frac{1}{L}\|x - \hat{x}_{\mathrm{agn}} \|_2 \\
\leq & \|\task(x)-\task(\hat{x}_{\mathrm{agn}} )\|_2 \\
\leq & \|\task^*(x) - \task^*(\hat{x}_{\mathrm{awa}})\|_2 + 2\epsilon + L^*\|\hat{x}_{\mathrm{awa}} - \hat{x}_{\mathrm{agn}}\|_2 . 
\end{aligned}
\end{equation*}
\end{proof}

\lmaref{lemma:robust_aware} characterizes how close the task losses of task-aware robust models and task-agnostic non-robust models are. The reason that robust task models are preferable to non-robust models is that robust task models have smaller Lipschitz constants. In other words, when noise caused by communication or reconstruction perturbs the input of the models, the output is less sensitive, so the output of the perturbed task is closer to the original output.

With regard to task-aware autoencoders, it is obvious that they are preferable to task-agnostic ones since the former minimizes task losses. Task-agnostic autoencoders aim to reconstruct the full image, but most pixels in an image are not related to the task, so task-agnostic models are more bandwidth inefficient than task-aware models. Of course, when one has sufficient bandwidth to transmit a whole image perfectly, task-agnostic models will perform equally to task-aware models. In this case, $\|x - \hat{x}_{\mathrm{awa}} \|_2 = \|x - \hat{x}_{\mathrm{agn}} \|_2 = 0$ in \Eqref{eq:bnd_aware_robust}. 


\section{The Gaussian CEO Problem}
\label{sec:CEO}
The Gaussian CEO problem \citep{CEO,RateCEO} refers to the problem of distributed inference from noisy observations. The objective is to reconstruct the source from noisy observations rather than the noisy observations themselves, which motivated our first experiment of CIFAR-10 denoising. In the original setting, a White Gaussian source $X$ of variance $P$ is observed through two independent Gaussian broadcast channels $Y_j = X + Z_j$ for $i=1,2$ where $Z_1 \sim \mathcal{N}(0,N_1)$ and $Z_2 \sim \mathcal{N}(0,N_2)$. The observations $Y_1$ and $Y_2$ are separately encoded with the aim of estimating $X$ such that the mean square error distortion between the estimate $\hat{X}$ and $X$ is $D$.

The rate-distortion region $R_{\text{CEO}}(D)$ for the quadratic Gaussian CEO problem is the set of rate pairs $(R_1,R_2)$ that satisfy
\begin{align}
    R_1 &\geq r_1 + \frac{1}{2} \log D - \frac{1}{2} \log \left( \frac{1}{P} + \frac{1 - e^{-2r_2}}{N_2} \right),\\
    R_2 &\geq r_2 + \frac{1}{2} \log D - \frac{1}{2} \log \left( \frac{1}{P} + \frac{1 - e^{-2r_1}}{N_1} \right),
\end{align}
for some $r_1,r_2 \geq 0$ such that 
\begin{align}
    D \geq \left(\frac{1}{P} + \frac{1 - e^{-2r_1}}{N_1} + \frac{1 - e^{-2r_2}}{N_2} \right)^{-1}.
\label{eq:valid_region}
\end{align}

Considering the CIFAR-10 denoising experiment, we have $P=0.3125$, and for a target distortion of PSNR $20$ dB, we have $D=0.01$. For the sake of analysis, we assume the CIFAR-10 source to be Gaussian and find the lower bounds on rates $R_1$ and $R_2$. We begin by solving for the auxiliary variables that satisfy \Eqref{eq:valid_region}. Then, in the region of feasible auxiliary rates, we look for the pair of $(r_1,r_2)$ that minimize the sum lower bound on $R_1+R_2$. Solving this for $N_1=0.01$ and $N_2=1$, we get $R_1 \geq 3.44$ and $R_2 \geq 0.002$. Similarly, for $N_1=0.01$ and $N_2=0.1$, we get $R_1 \geq 2.45$ and $R_2 \geq 0.41$. Under the assumption of Gaussian sources, this clearly demonstrates that the rates for both sources are non-zero. Also, the rate allocated to a source is inversely proportional to the noise. Therefore, $R_1 > R_2$ when source $1$ is less noisy, implying that higher bandwidth is allocated to source $1$ since it contains more \textit{information} and is more \textit{important}.

\section{NDPCA with $4$ sources: }
\label{sec:4_sources}
To showcase the capability of NDPCA under more than $2$ sources, we examine it on the most complicated dataset amomng the $3$–Airbus detection. 
Views $1$ and $2$ have resolutions of $(160 \times 224)$ pixels, whereas views $3$ and $4$ have $(288 \times 224)$ pixels. Same as the previous experiments, we intentionally set the views to different sizes so that the importance to the task is unequal, resulting in different bandwidth allocatation among the sources in Fig. \ref{fig:4_src}.
\begin{figure*}[h]
\centering
\includegraphics[width=0.8\textwidth]{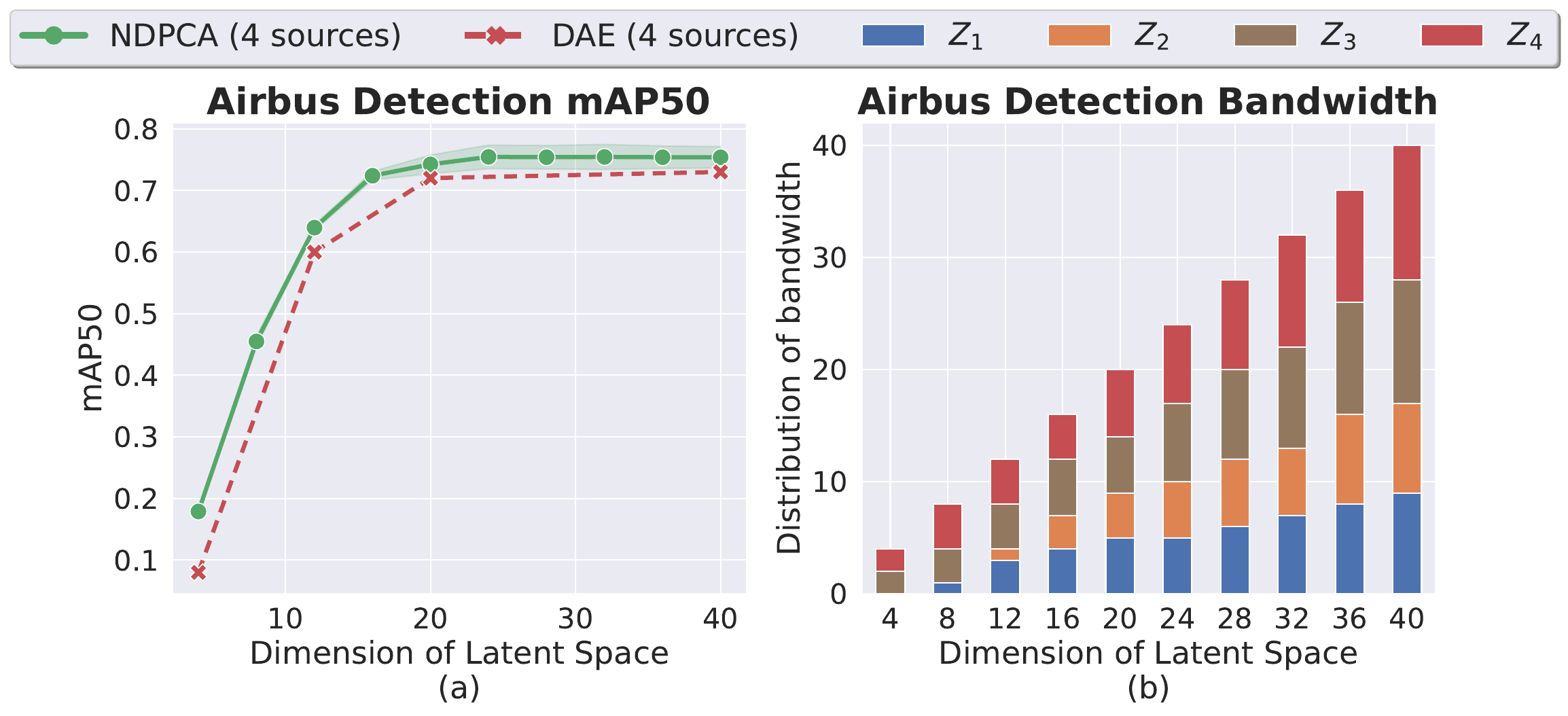}
\caption{\small{
\textbf{NDPCA with $4$ data sources:} (a) The performance of NPCA compared to DAE with $4$ data sources. NDPCA is better than DAE, which aligns with our result with $2$ data sources. 
(b) Distribution of total available bandwidth (latent space) among the $4$ views for NDPCA. 
The difference in resolution emphasizes the distinct importance of each view in object detection, therefore $Z_3$ and $Z_4$ have greater dimensions than $Z_1$ and $Z_2$. 
}}
\label{fig:4_src}
\vspace{-1.5em}
\end{figure*}







\section{Details of the Datasets}
\label{sec:app_datasets}

\subsection{CIFAR-10 denoising:}
We started with the standard CIFAR-10 dataset and normalized the images to $[0,1]$. Two different views are created by adding different levels of Gaussian noise, $\mathcal{N}(0,0.1^2)$ and $\mathcal{N}(0,1)$. The pre-trained task model is created by training a denoising autoencoder that takes both views, concatenates them along the channel dimension, and produces a clean image. The autoencoders need to learn features that are important for this task model.
\label{sec:app_datasets_pnp}

\subsection{Locate and lift:}
We collected $20,000$ pairs of actions and the corresponding images of both views for our training set. 
The actions are $4$ dimensional, controlling the $x,y,z$ coordinate movements and the gripper of the robotic arm. 
We randomly cropped the images from $128 \times 128$ to $112 \times 112$ pixels to make our autoencoder more robust. 
The expert agent is pre-trained by the same data augmentation as well. 

\subsection{Airbus detection:}
We first cropped all original images of $2560 \times 2560$ pixels (Fig. \ref{fig:airbus_orig}) into $224 \times 224$ pixels with $28$ pixels overlapping between each cropped image. 
We then eliminated the bounding boxes that are less than $30\%$ left after cropping. 
\begin{figure*}[ht]
\centering
\includegraphics[width=0.75\textwidth]{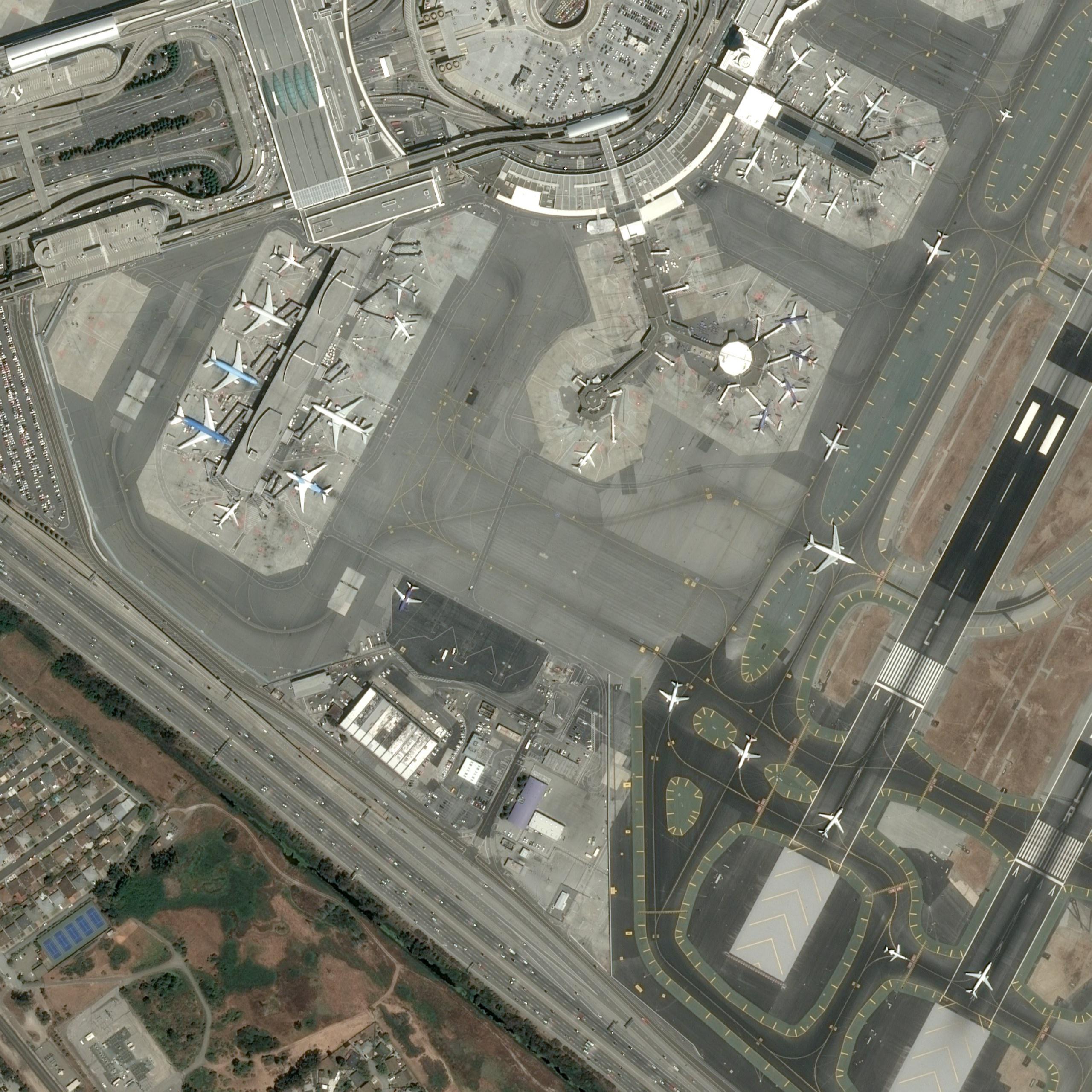}
\caption{
\small{\textbf{Original image of airbus detection}. The original images are $2560\times2560$ pixels, and we cropped them into smaller pieces in $224\times224$.}}
\label{fig:airbus_orig}
\vspace{-1em}
\end{figure*}

\section{Implementation Details}
\label{sec:app_imple}

\subsection{CIFAR-10 denoising:}
For the CIFAR-10 dataset, we used the standard CIFAR-10 dataset and applied different levels of AWGN noise to create two correlated datasets. We used the CIFAR-10 experiments as a proof of concept to try different architectures and loss functions and other techniques to finalize our framework. We choose $\lambda_{\mathrm{task}}=1$ for the task-aware setting and $\lambda_{\mathrm{rec}}=1$ for the task-agnostic setting. We run $4$ random seeds on NDPCA and all baselines to evaluate the performance. 

\subsection{Locate and lift:}
For the locate and lift experiment, we trained our autoencoder with the same random cropping setting as in Sec. \ref{sec:app_datasets}, which cropped the images from $128 \times 128$ to $112 \times 112$ pixels. During testing, we randomly initialized the location of the brick and center-cropped the images from $128 \times 128$ to $112 \times 112$ pixels. 
We scaled all images to $0$ to $1$ and ran $5$ random seeds on NDPCA and all baselines to evaluate the performance. 
For the task-aware setting, $\lambda_{\mathrm{task}}=500$, and $\lambda_{\mathrm{rec}}=1000$ for the task-agnostic. setting 

\subsection{Airbus detection:}
For the Airbus detection task, we used the original Yolo paper for our object detection model together with the detection loss \cite{Yolo}. Our experiments with the latest state-of-the-art Yolo v8 model \cite{Yolov8} showed that there is no big difference in the Airbus detection dataset in terms of run time and accuracy. 
Since the size of the original dataset is not enough to train an object detection model, we used the data augmentation proposed in Yolo v8, mosaic, to increase the size of the dataset. Mosaic randomly crops $4$ images and merges them to generate a new image. 
We used random resized crop, blur, median blur, and CLAHE enhancement during training, each with probability 0.05 by functions in the Albumentations package \cite{Albumentations}. 
We increased the size of the Airbus dataset from $5904$ to $21808$ with mosiac and trained the Yolo detection model. Finally, we trained our autoencoder with the same dataset, but downsample the images to $112 \times 112$ pixels so that the autoencoder is faster to train. 
For the task-aware setting, $\lambda_{\mathrm{task}}=0.1$, and $\lambda_{\mathrm{rec}}=0.5$ for the task-agnostic setting. 
We run $2$ random seeds on NDPCA and all baselines to evaluate the performance.

\subsection{Neural Autoencoder Architecture and Hyperparameters}
\begin{figure*}[ht]
\centering
\begin{subfigure}{0.4\textwidth}
\includegraphics[width=\textwidth]{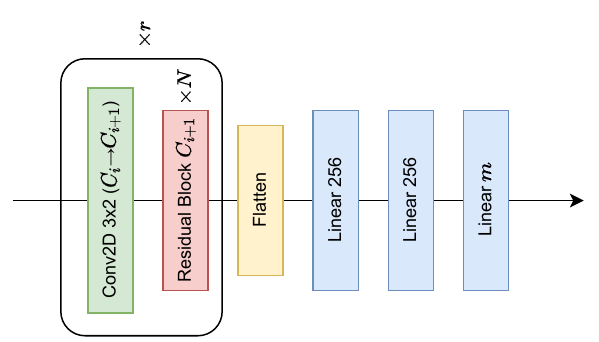}
\caption{
\small{Encoder architecture.}}
\label{fig:resnet_enc}
\end{subfigure}
\begin{subfigure}{0.51\textwidth}
\centering
\includegraphics[width=\textwidth]{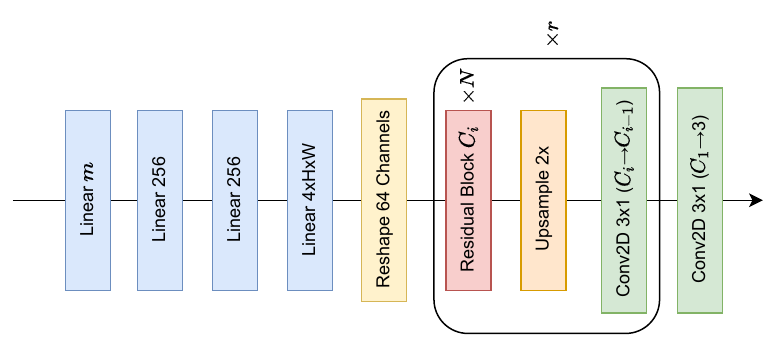}
\caption{
\small{Decoder architecture.}}
\label{fig:resnet_dec}
\end{subfigure}
\caption{\small{\textbf{ResNet Autoencoer:} The encoder processes inputs through $r$ convolution layers and $r\times N$ residual blocks, followed by $3$ fully connected layers with ReLU activation. 
The decoder processes latent representations in the reverse order from the encoder with $2\times$ upsamplings.  
}}
\label{fig:resnet}
\end{figure*}
We used the ResNet encoder shown in \figref{fig:resnet_enc} and the decoder in \figref{fig:resnet_dec} for all experiments. We used different numbers of filters and numbers of residual blocks for our experiments, shown as $C$ and $r$. We denote $\zdim{}$ as the number of latent dimensions.
The numbers of filters are $C_1=32,~ C_2=64,~ C_3=128$, $C_1=8,~ C_2=16,~ C_3=32,~ C_4=64$, and $C_1=16,~ C_2=32,~ C_3=64,~ C_4=128$, and the numbers of residual blocks are $r=0$, $r=1$, $r=1$  for CIFAR-10 denoising, locate and lift, and Airbus detection. For CIFAR-10 denoising, we use the Adam optimizer with a learning rate of $0.0002$, and for the other two experiments, we use the Adam optimizer with a learning rate of $0.0001$.
For the sake of training speed, when training DAE and JAE, we first trained a large network with $\zdim{\mathrm{max}}$ with each random seed. Then, we fixed the network parameters and trained concatenate $3$ fully connected layers on each encoder and decoder network to compress and decompress the data to smaller $\zdim{}$.

\subsection{Balancing Task-aware and Task-agnostic Loss}
\label{sec:weighted_task_loss}

\begin{figure*}[t!]
\centering
\includegraphics[width=0.85\textwidth]{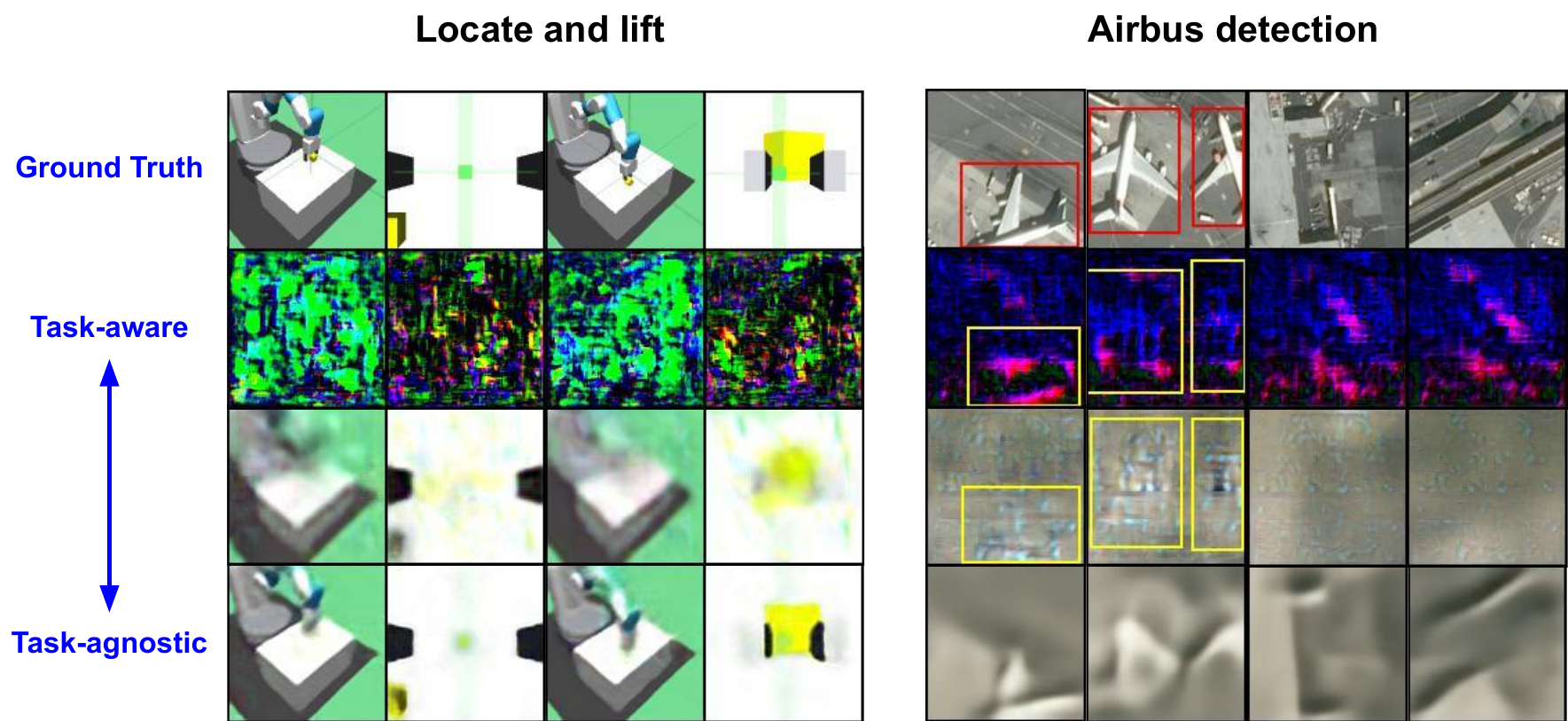}
\caption{\small{\textbf{Weighted task-loss:} 
    Weighted task-aware images faintly reconstruct the original images while restoring task-relevant features with high-frequency noise. In Airbus detection, location of Airbuses is captured with shiny high-frequency pixels in row $3$.
    }}
\label{fig:weighted_task_loss}
\end{figure*}
NPDCA has a loss function consisting of $2$ terms, as shown in \eqref{eq:overall_loss}:
\begin{equation*}
\mathcal{L}_{\mathrm{tot}} = 
\lambda_{\mathrm{task}} \underbrace{\|\Yhat - \Y \|_F^2}_{\mathrm{task~loss}}
+ \lambda_{\mathrm{rec}} \underbrace{\left(\|\Xhat{1} - \X{1}\|_F^2 + \|\Xhat{2} - \X{2}\|_F^2 + \dots \|\Xhat{K} - \X{K}\|_F^2 \right)}_{\mathrm{reconstruction~loss}}.
\tag{\ref{eq:overall_loss} revisited}
\end{equation*}
Previously, we tested two extreme cases of \eqref{eq:overall_loss}: task-aware when $\lambda_{\mathrm{task}}>0,\lambda_{\mathrm{rec}}=0$, and task-agnostic when $\lambda_{\mathrm{task}}=0,\lambda_{\mathrm{rec}}>0$. Of course, one can use different weighted sums of the $2$ terms in \eqref{eq:overall_loss}, which we call weighted task-aware. We show the resulting reconstructed image in \figref{fig:weighted_task_loss}, whose weights are a mixture of half of the two other methods.
Weighted task-aware images have both blurry reconstructions of the original images and task-relevant features. Unsurprisingly, the task loss and the reconstructed loss of weighted task-aware images are between pure task-aware and task-agnostic, that is, we can use the weights in the loss function to trade off compressing human perception features against task-relevant features. 
Interestingly, we can see that the task-aware images look similar to the images without Airbuses (last $2$ columns), and when there are Airbuses, the task-aware images look different. 
It means that the features of no Airbuses are pretty much the same in the latent space, thus resulting in similar images in pixel space.
Hence we can conclude that task-aware features are not random noise, they are meaningful features only to the task model but not to our eyes. 

\subsection{Storage and Training Complexity}

\begin{table}[!htbp]
\small
\centering
\begin{booktabs}{
  colspec = {crrrrrr},
  cell{1}{2,4,6} = {c=2}{c}, 
}
\toprule
  Model & CIFAR-10 & & Locate and lift & & Airbus detection &  \\
\midrule
       &  Storage (MB) &  Train (hr) &  Storage (MB) &  Train (hr) &  Storage (MB) &  Train (hr) \\
\cmidrule[lr]{2-3}\cmidrule[lr]{4-5}\cmidrule[lr]{6-7}
    NDPCA & $8.3$ & $0.25$ &$16.4$ & $5.0$ & $33.0$ & $13.0$ \\
    DAE  & $5 \times 8.4$ & $5 \times 0.21$ & $4\times16.3$ & $4\times5.0$ & $4\times22.5$ & $4\times11.5$ \\
    JAE  & $5 \times 10.2$ & $5 \times 0.22$ & $4\times11.4$ & $4\times3.5$ & $4\times32.9$ & $4\times10.5$ \\
\bottomrule
\end{booktabs}
\vspace{0.25em}
\caption{\small{\textbf{Storage and training complexity:} NDPCA has slightly more storage and training overload than other models for a single bandwidth but can operate across different bandwidths. We multiply the number of bandwidths tested in \figref{fig:results} to the storage size and training time of DAE and JAE as they require different models for different compression levels. 
}}
\label{tab:storage}
\end{table}
One key feature of NDPCA is that it only needs one model to operate in different bandwidths. Therefore, we only need to train and store one model at the edge devices and the central node. We compare the complexity of storage and training in \tabref{tab:storage}. 
Although NDPCA has a larger storage size and longer training time than other models, it can operate across different bandwidths. 
According to \tabref{tab:storage}, if all models operate in more than $1$ bandwidths, NDPCA saves more storage and training overload because other models have more than $50\%$ of NPDCA's overload.
For CIFAR-10 denoising, we tested the training time on an RTX 4090, and for the locate and lift and Airbus detection experiments, we tested the training time on an NVIDIA RTX A5000.

\section{Ablation Study}
\label{sec:ablation}
\begin{figure*}[ht]
\centering
\includegraphics[width=1\textwidth]{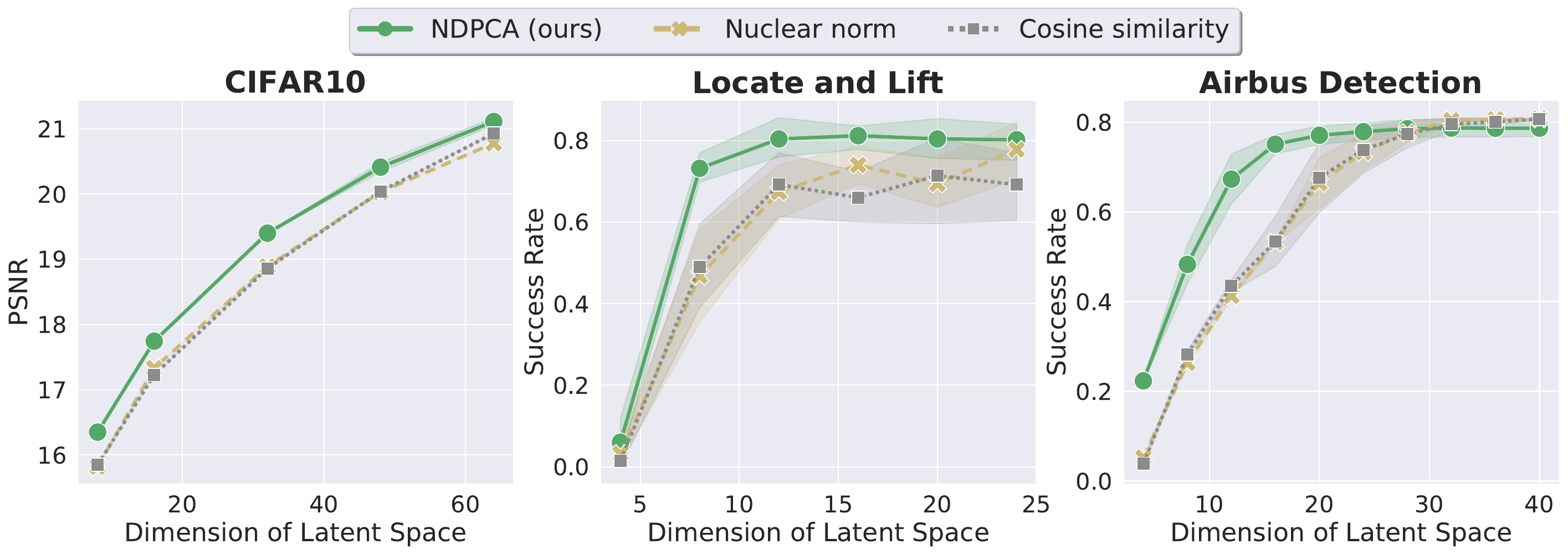}
\caption{\small{\textbf{Ablation study of the nuclear norm and cosine similarity:} Adding the nuclear norm or cosine similarity to the loss function does not improve the performance of the model when compressing latent representations to lower dimensions.}}
\label{fig:norm_results}
\end{figure*}

\subsection{Cosine similarity and nuclear norm}
\label{sec:ablation_norm}
In \figref{fig:norm_results}, we show that adding nuclear norm or cosine similarity in the training loss \Eqref{eq:overall_loss} does not help the model perform when we use DPCA to project latent representations into lower dimensions. 
We compared our proposed NDPCA with the DPCA module against NDPCA without the DPCA module but with the penalization of the nuclear norm and cosine similarity added. The weights of all the additional terms are $0.1$. 
From \figref{fig:norm_results}, we conclude that the DPCA module can increase the performance better than the other two.

\begin{figure*}[ht]
\centering
\includegraphics[width=1\textwidth]{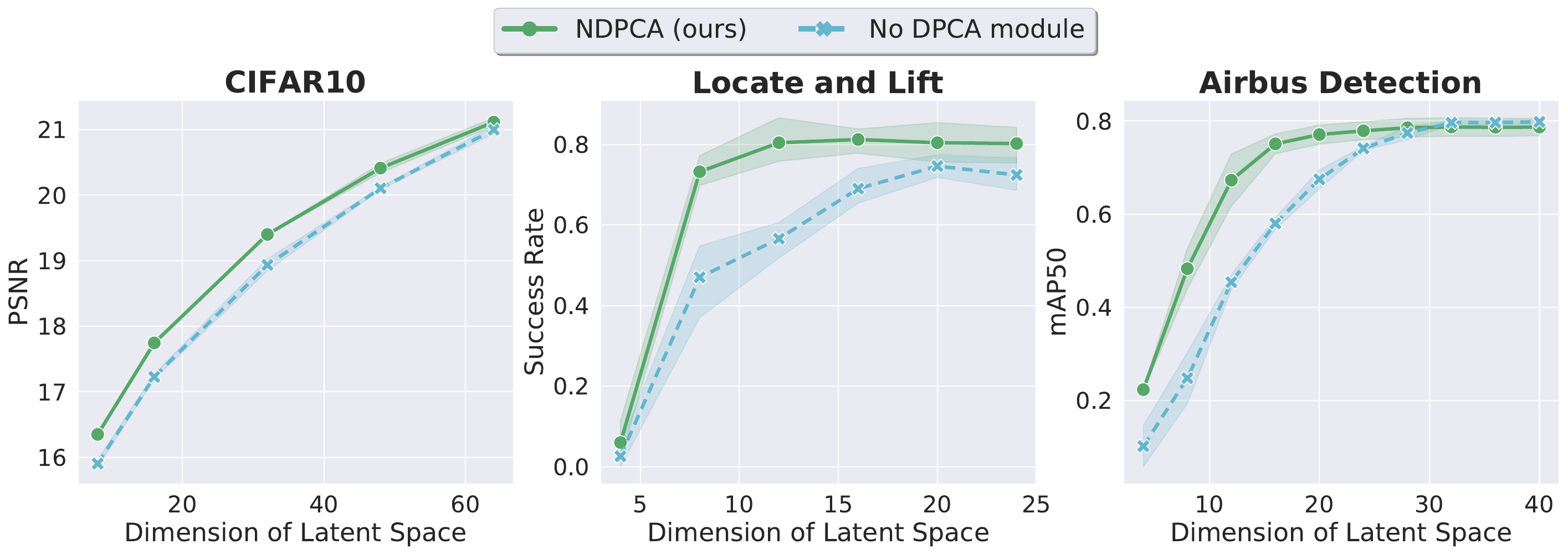}
\caption{\small{\textbf{Ablation study of DPCA module:} The proposed DPCA module effectively increases the performance in lower bandwidths, while achieving the same performance at larger bandwidths.}}
\label{fig:ablation_results}
\end{figure*}

\subsection{DPCA module}
\label{sec:ablation_DPCA}
In \figref{fig:ablation_results}, we show that the proposed DPCA module can help the neural autoencoder learn linear compressible representations, as described in \secref{sec:NDPCA}. We see that with the DPCA module, NDPCA can increase the performance in lower bandwidths, while saturating at the performance close to the model without the module. 
We conclude that with the DPCA module, NDPCA learns to generate low-rank representations, so the performance is better in lower bandwidths. However, when the bandwidth is higher, the bandwidth can almost fully restore the representations, so the two methods perform similarly.

\subsection{Single view performance of locate and lift}
\label{sec:ablation_lift}
In the locate and lift experiments, the reinforcement learning agent leverages information from both views as input to manipulate. Here, we detail why the $2$ views are complementary to accomplish the task. The success rate of an agent is $76\%$ with only the arm-view and $45\%$ with the side-view. When combining both, the success rate is $83\%$. 
The reason why the views are complementary is that the side-view provides global information on the position of the arm and the brick, but sometimes the brick is hidden behind the arm. The arm-view captures detailed information from a narrow view of the desk. Once the arm-view captures the brick, it is straightforward to move toward it and lift it. 
The arm view is more important because with only the arm-view, the agent can randomly explore the brick, but with only the side-view, the brick might be vague to see and thus harder to lift. 
Of course, with both views, the robotic arm can easily move toward the vague position of the brick and use arm-view to lift it.

\end{document}